\def\@email#1#2{%
 \endgroup
 \patchcmd{\titleblock@produce}
  {\frontmatter@RRAPformat}
  {\frontmatter@RRAPformat{\produce@RRAP{*#1\href{mailto:#2}{#2}}}\frontmatter@RRAPformat}
  {}{}
}%
\newcommand{\RN}[1]{\uppercase\expandafter{\romannumeral #1\relax}}
\newcommand*\diff{\mathop{}\!\mathrm{d}}
\newtheorem{Th}{Theorem}[section]
\newtheorem{Pro}[Th]{Proposition}
\newtheorem{Cor}[Th]{Corollary}
\newtheorem{Lem}[Th]{Lemma}
\newtheorem{Rem}[Th]{Remark}
\renewcommand{\epsilon}{\varepsilon}
\renewcommand\S{{\mathcal{S}}}
\newcommand\E{{\mathcal E}}
\newcommand\NN{{\mathbb N}}
\newcommand\ZZ{{\mathbb Z}}
\newcommand\RR{{\mathbb R}}
\newcommand\R{{\mathcal R}}
\newcommand\C{{\mathcal C}}
\newcommand\ii{{\infty}}
\newcommand\D{{\mathcal D}}
\def\per{{\rm per }}
\def\loc{{\rm loc }}
\def\unif{ \text{\rm  unif }  }
\newcommand{\norm}[1]{\left\| #1\right\|}
\newcommand{\set}[1]{\left\{ #1\right\}}
\newcommand{\bra}[1]{\left( #1\right)}
\newcommand{\av}[1]{\left| #1\right|}
\newcommand{\abs}[1]{\left| #1\right|}
\renewcommand{\phi}{\varphi}
\def\div{{\rm div \;}}
\def\NN{{\mathbb N}}
\def\ZZ{{\mathbb Z}}
\def\RR{{\mathbb R}}
\renewcommand{\leq}{\leqslant}
\renewcommand{\geq}{\geqslant}
\begin{document}

\title{Homogenization of 2D materials in the Thomas--Fermi--von Weizs{\"a}cker theory}

\author{Saad Benjelloun}
 \email{saad.benjelloun@devinci.fr}
 \affiliation{De Vinci Higher Education, De Vinci Research Center, Paris, France}
 \affiliation{Makhbar Mathematical Sciences Research Institute, Casablanca, Morocco}

\author{Salma Lahbabi}
 \email{s.lahbabi@ensem.ac.ma}
 \affiliation{Equipe de Mathématiques Appliquées, LARILE, ENSEM, Hassan II University of Casablanca, Morocco}
\affiliation{College of Computing, Université Mohamed 6 Polytechnique, Benguerir, Morocco}
 
\author{Abdelqoddous Moussa}
 \email{abdelqoddous.moussa@um6p.ma}
 \affiliation{College of Computing, Université Mohamed 6 Polytechnique, Benguerir, Morocco}

\date{\today}

\begin{abstract}
We study the homogenization of the Thomas--Fermi--von Weizs{\"a}cker (TFW) model for 2D materials introduced in Ref.~\onlinecite{BB}. It consists in considering 2D-periodic nuclear densities with periods going to zero. We study the behavior of the corresponding ground state electronic densities and ground state energies. The main result is that these three-dimensional problems converge to a limit model that is one dimensional, similar to the one proposed in Ref.~\onlinecite{hb}. We also illustrate this convergence with numerical simulations and estimate the converging rate for the ground state electronic densities and the ground state energies.
\end{abstract}

\keywords{Thomas--Fermi--von Weizs{\"a}cker model, Homogenization, 2D materials, Periodic models, Crystals}

\maketitle

\tableofcontents

\section{Introduction and main results}

In recent years, two-dimensional (2D) materials have become an intensely active research field, marked by landmark discoveries and widespread \added{investigation~\cite{VdW,R5,
review-phosphorene,
review-MOS2}}, driven by their unique physical, electrical, chemical and optical properties, which significantly differ from their 3D counterparts~\cite{review-calcul-properties-2}. Electronic structure simulations are highly useful in the discovery of these properties and their tuning for potential applications~\cite{review-calcul-properties, review-calcul-properties-2, hetero}. Thus, the need for mathematical models and simulation algorithms tailored for 2D materials arises~\cite{review-computational, method-2D, method-3}.

Density Functional Theory (DFT) is one of the most widely used simulation tools in electronic structure calculations. It consists of describing the electrons by their density $\rho$ and the energy of the system by a functional of $\rho$.
A famous model in this class is the (orbital free) Thomas--Fermi--von Weizs{\"a}cker (TFW) model~\cite{rf3, vW}. From a mathematical point of view, an important result in the study of crystals is the thermodynamic limit problem. It consists of proving that when a finite cluster converges to some periodic perfect crystal, the corresponding ground state electronic density and ground state energy per unit volume converge to the periodic equivalent. This program has been carried out for the Thomas--Fermi--von Weizsäcker model for three-dimensional (3D) crystals in 
Ref.~\onlinecite{catto}
and for one- and two-dimensional (1D and 2D) crystals in~Ref.~\onlinecite{BB}. \added{Other properties of TFW-type models for crystals have been studied in the literature. The seminal work in~Ref.~\onlinecite{rf3} established the basic theory for atoms, molecules and crystals. Subsequent research addressed aspects such as the analysis of defects~\cite{KumarRamabathiran2024, Erlacher2011}, stochastic materials~\cite{Blanc2007}, potential symmetry breaking~\cite{Ricaud2018}, rigorous derivation of DFT models for 2D materials~\cite{hb}, geometric optimisation~\cite{Blanc2001}, and the emergence of periodic phenomena in large atoms~\cite{BjergSolovej2024}.}

In this paper, we investigate the homogenization of the TFW model for 2D crystals. Our goal is to find a homogeneous material equivalent to a 2D crystal in the limit when the lattice parameter goes to zero. This corresponds to putting more and more (normalized) nuclei in the unit cell, or equivalently looking at the crystal from further and further away. It turns out that the homogenized material can be described by a 1D model, in the same spirit as in~Ref.~\onlinecite{hb}, which allows reducing computational time and resources required to simulate a 2D material, at least in a zero-order approximation. 
 Our proof can be generalized if we substitute the  $\int \rho^{5/3}$ term in the kinetic energy by $\int \rho^p $ for some \replaced{ $p>\frac{3}{2}$ }{ $\frac{3}{2}< p$ }. Note that the strict convexity of the energy functional gives the uniqueness of the ground state, which plays an important role in the proof.

To our knowledge, the closest work to ours is Ref.~\onlinecite{Mon} 
where the authors used the 2D TFW model~\cite{BB} to derive macroscopic features of a crystal from the microscopic structure in the presence of an external electric field. The crystal is modeled in the band $\RR^2\times [-1,1]$ and micro--macro limit is taken when the ratio between the atomic spacing and the size of the crystal goes to zero.

The article is organized as follows. We start by recalling the Thomas--Fermi--von Weizs{\"a}cker model for finite systems in Section~\ref{sec:TFW-finite}, and for 2D crystals in Section~\ref{sec:TFW-2D}. In Section~\ref{sec:main-result}, we define the homogenization process, along with the limit problem, and state our main result, whose proof is detailed in Section~\ref{sec:homo}. Intermediate results about the 2D Coulomb interaction are presented in Section~\ref{sec:hartree}. Finally, numerical illustrations are gathered in Section~\ref{sec:numerics}.

\medskip

\paragraph*{\bf Acknowledgments}

The research leading to these results has received funding from OCP grant AS70 ``Towards phosphorene-based materials and devices''. Prof. S. Lahbabi thanks the CEREMADE for hosting her during the final writing of this article.

\subsection{Thomas--Fermi--von Weizs{\"a}cker model for finite systems}\label{sec:TFW-finite}
We present in this section the TFW model for finite systems. Let 
\replaced{$m \in L^1(\RR^3, \RR^+)$ }{$m\in L^1(\RR^3)$}  
be a finite nuclear charge density.  The state of the electrons is described by a non-negative electronic density $ \rho\in L^1(\RR^3)$. The energy functional is given by
\begin{equation}
	\E^m(\rho)=\int_{\mathbb{R}^3} \abs{\nabla \sqrt{\rho}}^2+\int_{\mathbb{R}^3} \rho^{5/3}+\frac12 D(\rho-m,\rho-m),
	\label{eq:energy}
\end{equation}
where the first two terms represent the kinetic energy of the electrons and $D(f,g)$ is the Coulomb interaction between charge densities $f$ and $g$ in the Coulomb space \replaced{$\C=\set{f\in \S'(\RR^3,\RR),\; \frac{\widehat{f}}{\av{\cdot}}\in L^2(\RR^3)}$}{$\C=\set{f\in \S'(\RR^3),\; \frac{\widehat{f}}{\av{\cdot}}\in L^2(\RR^3)}$}. It is defined by 
$$ D(f,g)= \int_{\mathbb{R}^3}\int_{\mathbb{R}^3} \dfrac{f(x)g(y)}{\abs{x-y}}\diff x \diff y=4\pi\int_{\RR^3}\frac{
\overline{\widehat{f}(k)}\widehat{g}(k)}{\av{k}^2} \diff k, $$
where $\widehat{f}(k)$ denotes the \replaced{Fourier transform of $f$ evaluated at $k$}{the $k-$th Fourier coefficient for $f$}. The ground state is given by the following minimization
problem
\begin{equation}
I^m=\inf \left\{ \E^m(\rho),\; \rho\geq 0,\; \sqrt{\rho}\in H^1(\mathbb{R}^3),\int_{\mathbb{R}^3}\rho=\int_{\mathbb{R}^3}m  \right\}.
\label{TFW-R}
\end{equation}
It is well known that problem \eqref{TFW-R} has a unique minimizer $\rho$ (see for instance Ref.~\onlinecite{Ben}). $u=\sqrt{\rho}$ is the unique solution of the corresponding Euler--Lagrange equation
\begin{equation}\label{3D_EL}
\begin{cases}
	-\Delta u+\frac53 u^{7/3}+u\Phi=\lambda u, \\
	-\Delta \Phi=4\pi (u^2-m),
\end{cases} 
\end{equation}
where $\displaystyle  \lambda\in \RR, \mbox{ and }\Phi=(u^2-m)*\frac{1}{\av{\cdot}}$ is the mean--field potential.  

\subsection{Thomas--Fermi--von Weizs{\"a}cker model for 2D crystals}\label{sec:TFW-2D}
We present in this section the TFW model for 2D crystals introduced in Ref.~\onlinecite{BB}. 2D crystals are characterized by a nuclear density $m$ that has the periodicity of a 2D lattice 
$\R=a_1\ZZ+a_2\ZZ$, where $(a_1,a_2)$ are two linearly independent vectors in $\RR^2$ (see Figure~\ref{Tux}); namely 
$$
m(x_1+k_1,x_2+k_2,x_3)=m(x_1,x_2,x_3),\quad \forall x\in \RR^3,\; \forall (k_1,k_2)\in \R. 
$$

\begin{figure}[!hbt]
\centering
\setlength{\unitlength}{900sp}%

\begingroup\makeatletter\ifx\SetFigFont\undefined%
\gdef\SetFigFont#1#2#3#4#5{%
  \reset@font\fontsize{#1}{#2pt}%
  \fontfamily{#3}\fontseries{#4}\fontshape{#5}%
  \selectfont}%
\fi\endgroup%

\begin{picture}(5000,13000)(0,-10000)

{\color[rgb]{0,0,1}\thinlines\put(  -1200,-6800){\circle*{350}}}%
{\color[rgb]{0,0,1}\put(500,-6800){\circle*{350}}}%
{\color[rgb]{0,0,1}\put(2200,-6800){\circle*{350}} }%
{\color[rgb]{0,0,1}\put(3900,-6800){\circle*{350}} }%
{\color[rgb]{0,0,1}\put(5600,-6800){\circle*{350}} }%
{\color[rgb]{0,0,1}\put(7300,-6800){\circle*{350}} }%

{\color[rgb]{0,0,1}\put(-200,-5100){\circle*{350}} }%
{\color[rgb]{0,0,1}\put(1500,-5100){\circle*{350}} }%
{\color[rgb]{0,0,1}\put(3200,-5100){\circle*{350}} }%
{\color[rgb]{0,0,1}\put(4900,-5100){\circle*{350}}  }%
{\color[rgb]{0,0,1}\put(6600,-5100){\circle*{350}} }%
{\color[rgb]{0,0,1}\put(8500,-5100){\circle*{350}} }%

{\color[rgb]{0,0,1}\put(800,-3400){\circle*{350}}}%
{\color[rgb]{0,0,1}\put(2500,-3400){\circle*{350}}}%
{\color[rgb]{0.80, 0.80, 1.0}\put(4200,-3400){\circle*{350}}}%
{\color[rgb]{0,0,1}\put(5900,-3400){\circle*{350}}}%
{\color[rgb]{0,0,1}\put(7600,-3400){\circle*{350}} }%
{\color[rgb]{0,0,1}\put(9500,-3400){\circle*{350}} }%

{\color[rgb]{0,0,1}\put(1800,-1700){\circle*{350}}}%
{\color[rgb]{0.80, 0.80, 1.0}\put(3500,-1700){\circle*{350}}}%
{\color[rgb]{0.80, 0.80, 1.0}\put(5200,-1700){\circle*{350}}}%
{\color[rgb]{0,0,1}\put(6900,-1700){\circle*{350}}}%
{\color[rgb]{0,0,1}\put(8600,-1700){\circle*{350}} }%
{\color[rgb]{0,0,1}\put(10500,-1700){\circle*{350}} }%

{\color[rgb]{0,0,1}\put(2800,00){\circle*{350}}}%
{\color[rgb]{0.80, 0.80, 1.0}\put(4500,00){\circle*{350}}}%
{\color[rgb]{0,0,1}\put(6200,00){\circle*{350}}}%
{\color[rgb]{0,0,1}\put(7900,00){\circle*{350}}}%
{\color[rgb]{0,0,1}\put(9600,00){\circle*{350}} }%
{\color[rgb]{0,0,1}\put(11500,00){\circle*{350}} }%

{\color[rgb]{1,0,0}\put(3550,-2550){\circle*{200}} }%
{\color[rgb]{1,0,0}\put(5250,-2550){\circle*{200}} }%
{\color[rgb]{1,0,0}\put(3150,-4250){\circle*{200}} }
{\color[rgb]{1,0,0}\put(4650,-4250){\circle*{200}} }%

\put(3300,-2400){$Q$}
\put(3300,-2550){\line( 1,0){1700}}
\put(2900,-4250){\line( 1,0){1700}}

\put(2900,-4250){\line(1,4){410}}
\put(4600,-4250){\line(1,4){410}}

\put(3800,-3400){\vector(-1,-2){2600}}
\put(400,-8500){$x_1$}

\put(4000,-3400){\vector(1,0){7000}}
\put(10000,-3100){$x_2$}

\put(4000,-3400){\vector(0,1){5000}}
\put(4300,1000){$x_3$}


\put(3300,2000){\line( 1,0){1700}}
\put(2900,500){\line( 1,0){1700}}
\put(2900,500){\line(1,4){370}}
\put(4600,500){\line(1,4){370}}


\put(3300,2000){\line( 0,-1){10000}}
\put(5000,2000){\line( 0,-1){10000}}
\put(2900,500){\line(0,-1){10000}}
\put(4600,500){\line(0,-1){10000}}

\put(3300,-8000){\line( 1,0){1700}}
\put(2900,-9500){\line( 1,0){1700}}
\put(2900,-9500){\line(1,4){370}}
\put(4600,-9500){\line(1,4){370}}

\put(5000,-9500){$\Gamma$}

\end{picture}%
\caption{Example of a 2D lattice and its unit cell $\Gamma$. }
\label{Tux}
\end{figure}

\noindent From now on, we denote by $Q$ the unit cell of $\R\subset\RR^2$ and by $\Gamma=Q\times \RR$ the unit cell of $\R$ seen as a lattice in $\RR^3$. For $x\in \RR^3$, we denote by $\underline{x}=(x_1,x_2)\in \RR^2$ so that $x=(\underline{x},x_3)$.

By means of a thermodynamic limit procedure, it has been shown in Ref.~\onlinecite{BB} that 2D crystals can be described by a model similar to \eqref{eq:energy}--\eqref{TFW-R} posed in the unit cell $\Gamma$. The main difference is that the 3D Green function $ \displaystyle \frac{1}{\av{x}}$ is replaced by the 2D periodic Green function $G$ solution of 
$$
-\Delta G=4\pi \sum_{k\in \R\times \set{0}}\delta_k.
$$
An explicit formula of $G$ is given by Ref.~\onlinecite[equation 11]{BB}
\begin{equation}
	G(x)=-\dfrac{2\pi}{\av{Q}}\abs{x_3}+ \sum_{\underline{k}\in \R}\left(
	\dfrac{1}{\abs{{x}-(\underline{k},0)}}-\dfrac{1}{\av{Q}}\int_{Q}  \dfrac{\diff \underline{y}}{\abs{{x}-\left(\underline{y}+\underline{k},0\right)}}
	\right).
	\label{eq:def G}
\end{equation}
It can be seen as the sum over the lattice of the Coulomb potential created by a point charge placed at the lattice sites, screened  by a uniform background of negative unit charge. A Fourier decomposition of $G$ can be found in Ref.~\onlinecite{hb}. The 2D crystals energy functional then reads 
\begin{equation}
	\displaystyle \E^m_{\per}(\rho) =\int_{\Gamma } \abs{\nabla \sqrt{\rho}}^2+\int_{\Gamma } \rho^{5/3}+\frac12 D_G(m-\rho,m-\rho),
	\label{eq:E_per}
\end{equation}
where the Hartree interaction $D_G$ is given by 
$$\displaystyle D_G(f,g):=\int_{\Gamma }\int_{\Gamma }{f(x)g(y)}G(x-y)\diff x\diff y.$$
\begin{Rem}
	In Ref.~\onlinecite{BB}, there is a Coulomb correction term in the energy which comes from the Dirichlet boundary condition at infinity considered in the thermodynamic limit procedure, so that the energy is 
	\begin{equation*}
		\displaystyle \widetilde{\E^m_{\per}}(\rho) =\E^m_{\per}(\rho)+\dfrac{1}{\av{Q}}\int_{\Gamma }\int_{\Gamma }\dfrac{ \text{\replaced{$m(x)m(y)$}{$f(x)g(y)$}}}{\abs{x-y}}. 
	\end{equation*}
	In our study, we omit this correction term as it does not affect the problem from a mathematical point of view. 
\end{Rem}
\noindent The ground state is given by the minimization problem
\begin{equation}
	I^m_{\per}=\inf \left\{ \E^m_{\per}(\rho),\quad \rho\geq 0,\quad \sqrt{\rho}\in X_{\per},\quad\int_{\Gamma}\rho=\int_{\Gamma}m \right\}
	\label{TFW-per}
\end{equation}
where 
$$X_{\per}=\left\{ v \in H^1_{\per}(\Gamma ),\; \left( 1+\abs{x_3} \right)^{1/2}v \in L^2_{\per}(\Gamma ) \right\}.$$

\added{The weight $(1+|x_3|)^{1/2}$ in the definition of the function space $X_{per}$ is there to insure that $\rho:=v^2$, with $v\in X_\per$, satisfies $(1+\av{x_3})\rho\in L^1(\Gamma)$ and thus has a finite interaction energy, as  the periodic potential $G$ contains a term proportional to $|x_3|$. 
} This problem has been studied in~\added{Ref.~\onlinecite[section~3.2.3]{BB}}, along with some basic properties of the solution, that we summarize in the following Theorem. 

\begin{Th}[{Ref.~\onlinecite[Theorem 3.2]{BB}}]
Let $m\neq 0$ be a smooth non-negative, $\R$-periodic function with compact support w.r.t $x_3$. 
Then, the minimization problem~\eqref{TFW-per} has a unique minimizer $\rho$. \added{Moreover,} $u=\sqrt{\rho}$ is the unique solution of the corresponding Euler--Lagrange system 
\begin{equation}
    \begin{cases}
    -\Delta u+\frac{5}{3}u^{7/3}+u\Phi=\lambda u,\\
    -\Delta \Phi=4\pi(u^2-m),\\
    \end{cases}
    \label{base}
\end{equation}
where $\lambda \in\RR$. In addition, $u\in L^{\infty}(\mathbb{R}^3)$ and $\abs{u(x)}\leq \dfrac{C}{1+\abs{x_3}^{3/2}}$, for $\abs{x_3}>1$, $C>0$ being a constant independent of the density $m$.
\label{Th2}
\end{Th}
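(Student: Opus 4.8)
The plan is to follow the standard Thomas--Fermi--von Weizs\"acker variational scheme, adapted to the slab $\Gamma=Q\times\RR$, establishing the four statements (existence, uniqueness, Euler--Lagrange, pointwise bounds) in that order.

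\textbf{Existence and uniqueness.} I would first check that $\E^m_{\per}$ is bounded below on the constraint set of \eqref{TFW-per}: the kinetic and Thomas--Fermi terms are nonnegative, and $\tfrac12 D_G(m-\rho,m-\rho)$ is nonnegative on the charge-neutral class $\{\int_\Gamma(m-\rho)=0\}$. The latter uses the Fourier representation of $G$ recalled from \cite{hb}: the coefficients attached to nonzero reciprocal vectors are positive, while the $-\tfrac{2\pi}{\abs{Q}}\abs{x_3}$ part of $G$ contributes, after integrating out $\un{x}$, the kernel $-\abs{x_3-y_3}$, which is conditionally positive definite and so gives a nonnegative contribution on zero-mean densities. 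Then, for a minimizing sequence $u_n=\sqrt{\rho_n}$, the resulting bounds on $\int_\Gamma\abs{\nabla u_n}^2$, $\int_\Gamma u_n^{10/3}$ and $\int_\Gamma(1+\abs{x_3})u_n^2$ make $(u_n)$ bounded in $X_{\per}$; I would pass to a weak limit $u$ and use weak lower semicontinuity of all three energy terms (the Hartree one because $D_G(\cdot,\cdot)\geq0$ on the neutral class), while the weight $1+\abs{x_3}$ prevents escape of mass to $\abs{x_3}=\infty$ and so preserves the constraint $\int_\Gamma\rho=\int_\Gamma m$ in the limit. Uniqueness is strict convexity: $\rho\mapsto\int_\Gamma\abs{\nabla\sqrt\rho}^2=\int_\Gamma\tfrac{\abs{\nabla\rho}^2}{4\rho}$ is convex, $\rho\mapsto\int_\Gamma\rho^{5/3}$ is strictly convex, $\rho\mapsto\tfrac12 D_G(m-\rho,m-\rho)$ is convex on the affine constraint, and the constraint set is convex.

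\textbf{Euler--Lagrange, regularity, positivity.} Since $\abs{u}$ has the same energy as $u$, one may take $u\geq0$; writing stationarity with a Lagrange multiplier $\lambda$ for the mass constraint gives $-\Delta u+\tfrac53 u^{7/3}+\Phi u=\lambda u$ together with $-\Delta\Phi=4\pi(u^2-m)$, $\R$-periodic in $\un{x}$. An elliptic bootstrap ($\Phi\in L^p_{\loc}$, then $u\in W^{2,p}_{\loc}$, iterating, using that $m$ is smooth) gives smoothness of $u$ and $\Phi$, and the strong maximum principle / Harnack inequality gives $u>0$ everywhere, so $u$ solves \eqref{base} and is its unique solution.

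\textbf{Boundedness and algebraic decay.} From $-\Delta\Phi=4\pi(u^2-m)$ with $u^2-m$ of zero mean in $L^1\cap L^\infty$, I would show $\Phi$ is bounded, with finite limits $\Phi_\pm$ as $x_3\to\pm\infty$ (the linear-in-$x_3$ part of $G$ cancelling by neutrality). Then $\norm{u}_{L^\infty}<\infty$ follows from the equation by a subsolution argument: where $u$ is large one has $-\Delta u\leq(\lambda-\Phi)u-\tfrac53 u^{7/3}<0$, so a Moser/De Giorgi iteration fed by the a priori $u\in H^1_{\per}\cap L^{10/3}$ bound closes. For the decay, fix $R$ with $\operatorname{supp}m\subset\{\abs{x_3}\leq R\}$; for $\abs{x_3}>R$ one has $-\Delta u+\tfrac53 u^{7/3}=(\lambda-\Phi)u$, and a maximum-principle / phase-plane analysis shows $u\to0$ at infinity and that $\lambda-\Phi$ cannot remain bounded below by a positive constant where $u>0$ tends to $0$; hence in the tail $w(x)=C(1+x_3^2)^{-3/4}$ is a supersolution for $\abs{x_3}$ large and $C$ large, the decisive point being that $-\partial_{x_3}^2 w+\tfrac53 w^{7/3}$ dominates, with the two terms balancing exactly when the exponent is $3/2$ (from $\alpha+2=\tfrac73\alpha$). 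Comparison then yields $\abs{u(x)}\leq C(1+\abs{x_3})^{-3/2}$ for $\abs{x_3}>1$.

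\textbf{Expected main obstacle: uniformity of $C$ in $m$.} I expect the hard part to be that $C$ can be taken independent of $m$. The mechanism should be that $\tfrac53 u^{7/3}$ is an \emph{absorbing} (Keller--Osserman-type) nonlinearity: integrating $-\partial_{x_3}^2 u+\tfrac53 u^{7/3}\leq0$ in the tail gives $u(x)^{-2/3}\geq\tfrac23(\abs{x_3}-R)$ \emph{regardless} of the size of $u$ at $\abs{x_3}=R$, hence a universal bound $u(x)\leq C(\abs{x_3}-R)^{-3/2}$ beyond a fixed distance from $\operatorname{supp}m$, with $C$ depending only on the exponent $7/3$. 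Making this heuristic into a rigorous $m$-independent constant requires controlling, uniformly over the admissible nuclear densities, both the sign of $\lambda-\Phi$ and the region where $u$ is not yet small; it is exactly this uniform control — carried out in \cite{BB} — that the present paper will invoke when $m$ is replaced by the rescaled homogenization densities.
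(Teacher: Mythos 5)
The paper does not actually prove Theorem~\ref{Th2}: it is imported verbatim from \cite[Theorem 3.2]{BB}, and the only commentary it gives is the Remark immediately after, which says that the $m$-independence of the decay constant ``comes from a supersolution method applied to the equation \eqref{base}.''  There is therefore no in-paper proof to compare with line by line; what can be said is whether your reconstruction is consistent with that pointer and with the structure of the result.

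It is.  Your plan — direct method with the weighted space $X_{\per}$ to prevent mass escape in $x_3$, nonnegativity of $D_G$ on neutral densities (you argue this via the Fourier representation and conditional positive definiteness of $-\abs{x_3-y_3}$ on zero-mean profiles; one can also just write $D_G(f,f)=\norm{\nabla(G*_\Gamma f)}^2_{L^2(\Gamma)}\geq0$, which is literally what the paper does in Corollary~\ref{D_G bound}), strict convexity in $\rho$ for uniqueness, Lagrange multiplier and elliptic bootstrap for \eqref{base}, and a Keller--Osserman/supersolution argument for the tail — is exactly the standard TFW chassis and matches the Remark.  The exponent computation $\alpha+2=\tfrac73\alpha\Rightarrow\alpha=3/2$ is right, and you correctly identify the real content of the last clause: the absorption by $\tfrac53u^{7/3}$ produces a bound on $u$ at distance $d$ from $\operatorname{supp}m$ that depends only on $d$ and the exponent $7/3$, not on the boundary value of $u$, which is what makes $C$ independent of $m$.

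The one place your sketch is genuinely soft, and would need work before it is a proof rather than a plan, is the claim that $w=C(1+x_3^2)^{-3/4}$ is a supersolution of the full equation $-\Delta u+\tfrac53u^{7/3}+(\Phi-\lambda)u=0$ in the tail.  You assert that ``$\lambda-\Phi$ cannot remain bounded below by a positive constant'' there, but give no mechanism; and the heuristic integration $-\partial_{x_3}^2u+\tfrac53u^{7/3}\leq0$ presupposes you have already absorbed or sign-controlled $(\lambda-\Phi)u$, which is the hard part.  Making this rigorous requires either (a) a bound on $\lambda-\Phi$ on $\{\abs{x_3}>1\}$ that depends only on $\int_\Gamma m$ (fixed) and not on the shape of $m$, or (b) a comparison with the Keller--Osserman supersolution on slabs $\{1<\abs{x_3}<T\}$ for which the universal barrier dominates $(\lambda-\Phi)u$ once $u$ is already known to be small — and the smallness itself must be obtained uniformly in $m$.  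You flag this as the expected obstacle rather than crossing it, which is honest, but it is precisely the step the Remark attributes to \cite{BB} and that a self-contained proof would need to supply.
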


\begin{Rem}
The fact that, in inequality $ \textstyle \av{u(x)}\leq  \dfrac{C}{1+\abs{x_3}^{3/2}} $, for $\abs{x_3}>1$, the constant is independent of the density $m$,  comes from a supersolution method applied to equation \eqref{base} (see Ref.~\onlinecite[Theorem 3.1 \replaced{and}{} Theorem 2.3]{BB}).
\end{Rem}

\subsection{Homogenization of 2D materials}\label{sec:main-result}
In the framework of the Thomas--Fermi (TF) model and the reduced Hartree--Fock model (rHF), the recent work Ref.~\onlinecite{hb} studies reduced models for 2D homogeneous materials. The idea is that if the material is homogeneous, the 3D model is equivalent to a 1D model; simpler and less costly to simulate. In the present work, we are interested in the homogenization procedure, which models looking at a crystal macroscopically, from further and further away. Namely, we put more and more nuclei in the unit cell, with the right charge normalization, and  we ask the following questions:
\begin{itemize}
	\item What is the nuclear density at the limit?
	\item Does the ground state electronic density, potential and energy converge?
	\item What is the model describing the limit electronic structure?  
\end{itemize}

Let $m$ be a nuclear density satisfying the conditions of Theorem~\ref{Th2} and consider the  following sequence of nuclear densities which consists in putting \replaced{$N^2$}{$N$} small nuclei in the unit cell  
\begin{equation}\label{eq:mN}
m_N(x_1,x_2,x_3):=m(Nx_1,Nx_2,x_3),\quad \forall x\in \Gamma.
\end{equation}
We note that $m_N$ is $\frac{1}{N}\R$-periodic, which describes a more homogeneous material than the initial one. The limit when $N \rightarrow +\infty$ describes a homogeneous material (see Figure~\ref{com}).
\begin{figure}[!hbt]
\centering
\includegraphics[width=120mm,scale=1]{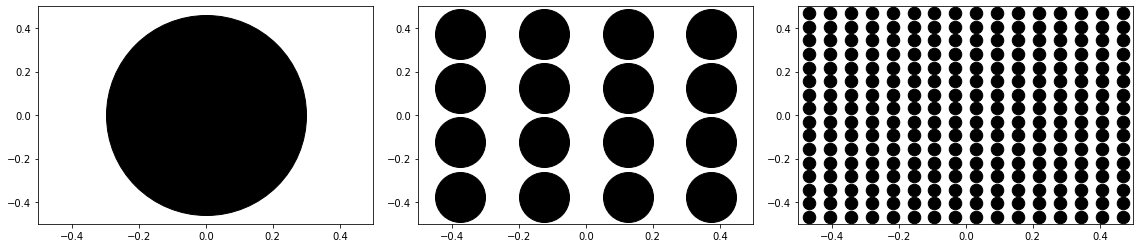}
\caption{The homogenization process for the nuclear density $m_N$, illustrated at $N=1,4$ and $16$ from left to right.}
\label{com}
\end{figure}
We will show that this model \replaced{\enquote{converges}}{"converges"}, in some sense that we will precise later, to the following limit model. For a 1D nuclear density $\mu$, we introduce the energy functional 
\begin{equation}\label{eq:Emu1}
	\E_1^\mu(\rho)=\int_{\mathbb{R}}\abs{\nabla \sqrt{\rho}}^2 +\int_{\mathbb{R}}\rho^{5/3}+\frac12 D_1(\rho-\mu, \rho-\mu),
\end{equation}
where the 1D Hartree interaction $D_1$ is defined for functions decaying fast enough by 
$$D_1(f,g)=-{2\pi}\int_{\mathbb{R}} \int_{\mathbb{R}}\abs{s-t}f(s)g(t)\diff t \diff s.$$
\added{Here, the 1D Green kernel $ -2 \pi \abs{t}$ is used}. The ground state of this model is given by the minimization problem
\begin{equation}
    I^\mu_1= \min\{E^\mu(\rho),\; \rho \geq 0,\; \sqrt{\rho} \in X_1, \int_{\mathbb{R}}\rho=\int_\RR \mu\},
    \label{TFW-lim}
\end{equation} 
where
$$ X_1=\left\{ v \in H^1(\mathbb{R}), \;(1+\abs{\cdot})^{1/2}v\in L^2(\mathbb{R}) \right\}. $$ 

\noindent The following Theorem, which is a direct consequence of Theorem~\ref{Th2} and the definitions of Hartree interaction terms $D_G$ and $D_1$, is similar to Ref.~\onlinecite[Theorems 2.2 and 2.8]{hb}, which treat Thomas--Fermi and reduced Hartree--Fock models.  
\begin{Th}
Let $\mu \in \D(\RR)$. The problem \eqref{TFW-lim} has a unique minimizer. Moreover, $u:=\sqrt{\rho}$ is the unique solution of the corresponding Euler--Lagrange equations 
\begin{equation*}
    \begin{cases}
    - u''+\frac53 u^{7/3}+u\Phi=\lambda u,\\
    - \Phi''=4\pi(u^2-\mu), 
    \end{cases}
\end{equation*}
where $\lambda\in\RR$, 
and there exists $C>0$, independent of $\mu$, such that for any  $\av{t}\geq 1$, $\av{u}\leq \frac{C}{1+ \av{t}^{3/2}}$. 
\label{thf}
\end{Th}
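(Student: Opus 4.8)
The plan is to obtain Theorem~\ref{thf} from Theorem~\ref{Th2} by realising the one-dimensional problem \eqref{TFW-lim} as the restriction of the two-dimensional periodic problem \eqref{TFW-per} to densities that do not depend on the transverse variable $\underline{x}=(x_1,x_2)$. Assume $\mu\ge 0$, $\mu\not\equiv 0$ (the excluded case being trivial). Fix an arbitrary 2D lattice $\R$ with unit cell $Q$ and set $m(\underline{x},x_3):=\mu(x_3)$; this is a smooth, nonnegative, $\R$-periodic function, compactly supported in $x_3$, with $\int_\Gamma m=\abs{Q}\int_\RR\mu<\infty$, so Theorem~\ref{Th2} applies and yields a unique minimizer $\rho^\star$ of \eqref{TFW-per}. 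Because $m$ is constant in $\underline{x}$, $G$ depends only on $x-y$, and $\int_\Gamma$ is invariant under transverse translations by $\R$-periodicity, every translate $\rho^\star(\cdot+\underline{a},\cdot)$, $\underline{a}\in\RR^2$, is admissible with the same energy; uniqueness then forces $\rho^\star(\underline{x}+\underline{a},x_3)=\rho^\star(\underline{x},x_3)$ for all $\underline a$, i.e. $\rho^\star=\bar\rho^\star(x_3)$.

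Next I would record the reduction identity
\[
\int_Q G(\underline{w},s)\,d\underline{w}=-2\pi\abs{s}\qquad\text{for all }s\in\RR ,
\]
most cleanly via the Fourier decomposition of $G$ from~\cite{hb}, which has the form $G(\underline{x},x_3)=-\frac{2\pi}{\abs{Q}}\abs{x_3}+\frac{2\pi}{\abs{Q}}\sum_{\underline{K}\in\R^*\setminus\{0\}}\abs{\underline{K}}^{-1}e^{\ri\underline{K}\cdot\underline{x}}e^{-\abs{\underline{K}}\abs{x_3}}$ (with $\R^*$ the dual lattice): only the $\underline K=0$ mode survives integration over $Q$. From this one gets, for $f,g$ depending only on $x_3$, $D_G(f,g)=\abs{Q}\,D_1(f,g)$, hence $\E^m_{\per}(\bar\sigma)=\abs{Q}\,\E_1^\mu(\bar\sigma)$ and $\int_\Gamma\bar\sigma=\abs{Q}\int_\RR\bar\sigma$ for every $\bar\sigma=\bar\sigma(x_3)$, together with the equivalence $\sqrt{\bar\sigma}\in X_1\Leftrightarrow\sqrt{\bar\sigma}\in X_{\per}$ (both reduce to $\sqrt{\bar\sigma}\in H^1(\RR)$ with $(1+\abs{t})^{1/2}\sqrt{\bar\sigma}\in L^2(\RR)$). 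Combining with the previous paragraph, $\bar\rho^\star$ minimizes \eqref{TFW-lim}, $I_1^\mu=\abs{Q}^{-1}I^m_{\per}$, and uniqueness transfers because any two 1D minimizers lift to two distinct $\underline{x}$-independent 2D minimizers, contradicting Theorem~\ref{Th2}.

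For the Euler--Lagrange system I would write \eqref{base} for $u=\sqrt{\bar\rho^\star}$: since $u$ and $m$ depend on $x_3$ only, $-\Delta u=-u''$; and the same transverse-translation argument (or the explicit reduction $\Phi=G*_\Gamma(u^2-m)$ using the identity above) shows $\Phi=\Phi(x_3)=-2\pi\int_\RR\abs{x_3-y_3}(u^2-\mu)(y_3)\,dy_3$, which satisfies $-\Phi''=4\pi(u^2-\mu)$ because $-\frac{d^2}{dt^2}(-2\pi\abs{t})=4\pi\delta_0$. This is exactly the announced 1D system. Finally, the bound $\abs{u(x_3)}\le C(1+\abs{x_3}^{3/2})^{-1}$ for $\abs{x_3}>1$ is the decay estimate of Theorem~\ref{Th2} applied to $m=\mu$, whose constant is already independent of the nuclear density (Remark following Theorem~\ref{Th2}), giving the claimed independence of $C$ on $\mu$.

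The only genuinely non-formal ingredient is the reduction identity $\int_Q G(\underline{w},s)\,d\underline{w}=-2\pi\abs{s}$ and the attendant interchange of $\int_Q$ with the (only conditionally convergent) Fourier/real-space expansion \eqref{eq:def G} of $G$; once this is justified, the rest is bookkeeping. Should this route feel too indirect, a self-contained proof along the lines of \cite[Theorems 2.2 and 2.8]{hb} and \cite[Theorem 3.2]{BB} is available: existence by the direct method on $X_1$, using that $D_1(f,f)\ge 0$ on charge-neutral $f$ (via $D_1(f,f)=\mathrm{const}\int_\RR\abs{\widehat f(k)}^2\abs{k}^{-2}\,dk$) and that the confining weight $(1+\abs{t})^{1/2}$ prevents loss of mass at infinity; uniqueness from strict convexity of $\rho\mapsto\E_1^\mu(\rho)$ on the affine constraint set; positivity of the minimizer and the 1D Euler--Lagrange system from the first-order conditions; and the $\abs{t}^{-3/2}$ decay from an explicit supersolution of $-u''+\frac53u^{7/3}+u\Phi=\lambda u$ with $\mu$-uniform constants.
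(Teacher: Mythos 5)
Your dimensional-reduction argument is the route the paper intends: the theorem is presented there as ``a direct consequence of Theorem~\ref{Th2} and the definitions of $D_G$ and $D_1$'', and the translation-invariance-plus-uniqueness device you use to force $\underline{x}$-independence of the lifted 2D minimizer is exactly what the paper deploys in Proposition~\ref{1D}. Your bookkeeping is correct and even a bit more careful than the paper's: you keep the factor $\abs{Q}$ in the reduction identities $D_G(f,g)=\abs{Q}\,D_1(f,g)$ and $\E^m_\per(\bar\sigma)=\abs{Q}\,\E_1^\mu(\bar\sigma)$, whereas Proposition~\ref{1D} (and the identity $I_0=\E_\per^{m_0}(\rho_0)=\E_1^{m_0}(\rho_0)$ used in Theorem~\ref{Th-H}) implicitly normalizes $\abs{Q}=1$.
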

\noindent  Our main contribution is the following Theorem. 
\begin{Th}
Let $m$ be a nuclear density satisfying the hypothesis of Theorem~\ref{Th2} and for $N\in\NN\setminus \set{0}$, let $m_N$ be defined as in~\eqref{eq:mN}. We denote by \replaced{$I_N=I^{m_N}_{\per}$}{$I_N=I^{m_N}$} and by $\rho_N$ 
the corresponding ground state given by Theorem~\ref{Th2}.
Let $m_0(t)=\frac{1}{\av{Q}}\int_{Q}m(\underline{x},t)\diff \underline{x}$ and denote by $I_0=I^{m_0}_1$ and by $\rho_0$ 
 the corresponding ground state given by Theorem~\ref{thf}.
  The following holds
\begin{enumerate}[i.]
    \item \label{i} $\displaystyle \lim_{N \rightarrow \infty} I_N = I_0$,
    \item \label{ii} $\rho_N$ converges to $\rho_{0}$ in $L^1(\Gamma)$, in $L^p_{loc}(\Gamma)$ for all $ 1\leq p \leq 3$, and almost everywhere,
    \item \label{iii} $\sqrt{\rho_N}$ converges \added{to} $ \sqrt{\rho_0}$ weakly in $H^1_\per(\Gamma)$. 
 \end{enumerate}
\label{Th-H}
\end{Th}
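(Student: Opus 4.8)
The plan is to run a variational argument ($\Gamma$-convergence in spirit, though we will argue by hand) combined with the uniform pointwise decay estimate from Theorem~\ref{Th2}, which is the crucial ingredient that makes everything compact and quantitative in the $x_3$ variable. First I would establish the upper bound $\limsup_N I_N \le I_0$ by constructing a recovery sequence: given the 1D minimizer $\rho_0$ with $u_0=\sqrt{\rho_0}$, take the constant-in-$\underline{x}$ test function $\widetilde\rho_0(\underline{x},x_3)=\rho_0(x_3)$ on $\Gamma$, which is admissible for~\eqref{TFW-per} with $m_N$ since $\int_Q m_N(\underline{x},x_3)\diff\underline{x}=\int_Q m(N\underline{x},x_3)\diff\underline{x}=\av{Q}\,m_0(x_3)$ after a change of variables, so the charge constraints match. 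The kinetic and $\rho^{5/3}$ terms of $\E^{m_N}_{\per}(\widetilde\rho_0)$ reduce (up to the factor $\av{Q}$) to those of $\E_1^{m_0}(\rho_0)$, so the only real work is showing $D_G(m_N-\widetilde\rho_0, m_N-\widetilde\rho_0)\to \av{Q}\,D_1(m_0-\rho_0,m_0-\rho_0)$; this I would get from the structure of $G$ in~\eqref{eq:def G} — the $-\tfrac{2\pi}{\av{Q}}\av{x_3}$ term survives and produces exactly $D_1$, while the lattice sum involving $\tfrac1{\av{x}}$ and its $Q$-average is a "correction" that vanishes because $m_N-\widetilde\rho_0$ averages to zero over $Q$ in $\underline{x}$ for each fixed $x_3$ and oscillates faster and faster. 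The needed estimates on $G$ should be quoted from Section~\ref{sec:hartree}.

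Next I would prove the lower bound $\liminf_N I_N \ge I_0$ together with the compactness statements~\eqref{ii}–\eqref{iii}. From the upper bound, $\E^{m_N}_{\per}(\rho_N)$ is bounded, hence $u_N=\sqrt{\rho_N}$ is bounded in $H^1_{\per}(\Gamma)$ and, using the weight in $X_{\per}$, also $(1+\av{x_3})^{1/2}u_N$ is bounded in $L^2_{\per}(\Gamma)$; moreover Theorem~\ref{Th2} gives the $m$-independent bound $u_N(x)\le C(1+\av{x_3}^{3/2})^{-1}$ for $\av{x_3}>1$, which is uniform in $N$ because $m_N$ has support in a fixed slab and the constant does not depend on the density. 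Extract a subsequence with $u_N \rightharpoonup u_\star$ in $H^1_{\per}(\Gamma)$ and a.e.; the uniform decay upgrades this to strong $L^1(\Gamma)$ and $L^p_{\loc}$ convergence of $\rho_N=u_N^2$ by dominated convergence, and also forces $\int_\Gamma u_\star^2 = \lim \int_\Gamma u_N^2 = \int_\Gamma m_N = \int_Q\int_\RR m = \av{Q}\int_\RR m_0$. The heart of the lower bound is to show $u_\star$ does not depend on $\underline{x}$: the $\underline{x}$-gradient contributes $\int_\Gamma \av{\nabla_{\underline x}u_N}^2 \ge 0$ to the energy and there is no obstruction to it vanishing, and indeed by testing the Euler–Lagrange system~\eqref{base} for $u_N$ one expects $\nabla_{\underline x} u_N \to 0$; alternatively, and more robustly, one symmetrizes — replacing $u_N$ by its $\underline{x}$-average $\bar u_N(x_3)$ only decreases the kinetic energy (convexity/Jensen for $\int\av{\nabla_{\underline x}\sqrt\rho}^2$ in the $\rho$ variable, which holds since $\rho\mapsto \av{\nabla\sqrt\rho}^2$ is convex) and, by the same $G$-structure computation as above, the Hartree term of $\bar u_N^2$ converges to the 1D one while the difference is controlled. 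Passing to the liminf in each term (weak lower semicontinuity for the kinetic part, Fatou for $\int\rho^{5/3}$, and continuity of the Hartree term along the strong $L^1\cap L^{5/3}$-type convergence, using boundedness of $G$-potentials from Section~\ref{sec:hartree}) yields $\liminf_N \tfrac1{\av Q}I_N \ge \E_1^{m_0}(\rho_\star) \ge I_0$ where $\rho_\star(x_3)=\tfrac1{\av Q}\int_Q u_\star^2$.

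Finally I would combine the two bounds: $\lim_N \tfrac1{\av Q} I_N = I_0$, which is~\eqref{i} (after checking the normalization $I_N$ vs $\av Q I_0$ — the statement as written presumably absorbs $\av Q$ into the definitions, so I would double-check that $\E^{m_N}_{\per}$ as defined already carries the right volume factor and state~\eqref{i} accordingly). Equality in the lower bound forces $\rho_\star$ to be the unique minimizer $\rho_0$ of $\E_1^{m_0}$ (uniqueness from strict convexity, Theorem~\ref{thf}), so the whole sequence — not just a subsequence — converges, giving~\eqref{ii} and~\eqref{iii}; strong $L^1$ and $L^p_{\loc}$ convergence of $\rho_N$ follow from a.e. convergence plus the uniform decay dominating function and equi-integrability, and $\sqrt{\rho_N}\rightharpoonup\sqrt{\rho_0}$ weakly in $H^1_{\per}(\Gamma)$ is the weak limit already identified. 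The main obstacle, I expect, is the two-scale nature of the Hartree term: showing rigorously that $D_G(m_N-\rho_N, m_N-\rho_N)$ converges to the 1D interaction requires controlling the fast $\underline{x}$-oscillations against the lattice Green function $G$, i.e.\ quantitative decay/averaging estimates on $G$ — this is exactly what Section~\ref{sec:hartree} is for, and getting a convergence rate (promised in the abstract) will hinge on how fast the $\underline k\ne 0$ Fourier modes of $G$ decay and on the regularity of $m$.
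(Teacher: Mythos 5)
Your skeleton matches the paper's: a priori bounds from the uniform decay of Theorem~\ref{Th2}, extraction of a weak $H^1$ limit $u_\star$, weak lower semicontinuity for the kinetic and $\rho^{5/3}$ terms along with a Hartree convergence $D_G(\cdot-m_N)\to D_G(\cdot-m_0)$, an upper bound via $\rho_0$ as recovery sequence, and uniqueness to promote subsequence convergence to full convergence. The genuine divergence is in how you identify the limit as one-dimensional. The paper first shows (Prop.~\ref{prop:rho0-minimizer}) that the subsequential limit $\rho_0$ minimizes the \emph{three}-dimensional functional $\E^{m_0}_{\per}$ — the key technical step being the dominated-convergence argument for $D_G(\rho-m_N,\rho-m_N)\to D_G(\rho-m_0,\rho-m_0)$ using the decomposition $G=-\frac{2\pi}{|Q|}|x_3|+\frac{1}{|x|}+\psi$ with $\psi\in L^\infty$ — and only then, via translation invariance of $\E^{m_0}_{\per}$ in $\underline{x}$ plus uniqueness of its minimizer, concludes that $\rho_0$ is $\underline{x}$-independent (Prop.~\ref{1D}). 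You instead propose a symmetrization argument: averaging $\rho$ over $Q$ decreases each term of $\E^{m_0}_\per$ by Jensen/Cauchy–Schwarz (for the Fisher and $\rho^{5/3}$ terms) and by the orthogonal splitting $D_G(f,f)=D_G(\bar f,\bar f)+D_G(f-\bar f,f-\bar f)$ (valid because $\Phi_{\bar f}$ is $\underline{x}$-independent and $\int_Q(f-\bar f)\,d\underline{x}=0$), then rigidity in the equality case forces $\underline{x}$-independence. This works in principle and is a legitimate, slightly more hands-on alternative to the translation trick; what it buys you is bypassing the step of proving $\rho_0$ minimizes the full 3D functional, but it still leans on the same Hartree-convergence lemma and on the uniqueness of the 1D minimizer.

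Two details in your write-up are wrong and would need repair. First, you claim $m_N-\widetilde\rho_0$ ``averages to zero over $Q$ in $\underline x$ for each fixed $x_3$,'' but its $Q$-average is $m_0(x_3)-\rho_0(x_3)$, which is generically nonzero; only the global integral over $\Gamma$ vanishes. The Hartree convergence you need does not follow from slice neutrality plus fast oscillation; it follows from weak convergence $m_N(\cdot,x_3)\rightharpoonup m_0(x_3)$ in $L^p(Q)$ for each $x_3$ together with $x_3$-dominated bounds, as in the paper's proof of~\eqref{eq:cv-D-rho-mN}. Second, invoking ``continuity of the Hartree term along the strong $L^1\cap L^{5/3}$-type convergence'' is too optimistic: $\rho_N\to\rho_0$ only in $L^1(\Gamma)$ and $L^p_{\loc}$, not strongly in $L^{5/3}(\Gamma)$. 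The safe route, which the paper takes in Prop.~\ref{prop:cv-rhoN}, is to show $\nabla\Phi_N\rightharpoonup\nabla\Phi_0$ weakly in $L^2(\Gamma)^3$ and use weak lower semicontinuity of the $L^2$ norm for the liminf direction, reserving genuine continuity only for the upper-bound step where the test function is fixed. Your flag about the $|Q|$ normalization between $\E^{m_0}_\per$ and $\E^{m_0}_1$ is well taken; the paper implicitly uses $|Q|=1$.
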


Our result is a zero order approximation of a finite $N$ situation. In homogenization theory of partial differential equations with  periodically oscillating coefficients, the two scale convergence technique is usually used to find higher order terms in the approximation~\cite{allaire}. Applying a similar approach to our problem is a perspective of this work. 

\section{Hartree interaction} 
\label{sec:hartree}
We recall in this section some properties of the Green function $G$, the Hartree interaction $D_G$ and prove a uniqueness result of the mean field potential $\Phi$.
In the subsequent proposition, we introduce a convenient decomposition of the kernel $G$, which proves to be useful throughout the paper.

\begin{Pro}\label{Deco-G}
	We have 
$$G(x)=-\dfrac{2\pi}{\av{Q}}\abs{x_3}+ \dfrac{1}{\abs{x}}+\psi(x),$$
where $\psi\in L^\ii(\Gamma)$.
\end{Pro}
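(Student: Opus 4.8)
The plan is to analyze the explicit formula \eqref{eq:def G} for $G$ and show that, after subtracting the leading singular term $\frac{1}{|x|}$ and the linear term $-\frac{2\pi}{|Q|}|x_3|$, what remains is bounded on $\Gamma$. Write
\[
\psi(x) := G(x) + \dfrac{2\pi}{\av{Q}}\abs{x_3} - \dfrac{1}{\abs{x}}
= \sum_{\underline{k}\in\R\setminus\set{0}}\left(\dfrac{1}{\abs{x-(\underline{k},0)}} - \dfrac{1}{\av{Q}}\int_Q\dfrac{\diff\underline{y}}{\abs{x-(\underline{y}+\underline{k},0)}}\right) - \dfrac{1}{\av{Q}}\int_Q\dfrac{\diff\underline{y}}{\abs{x-(\underline{y},0)}},
\]
i.e. the term $\underline{k}=0$ of the sum in \eqref{eq:def G} contributes the full Coulomb singularity $\frac{1}{|x|}$ together with the smeared-out background $-\frac{1}{|Q|}\int_Q |x-(\underline y,0)|^{-1}\diff\underline y$, and the latter is bounded (the integrand has only an integrable singularity in $\underline y$, uniformly in $x$ over the unit cell, and decays at infinity). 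So the real work is to show the tail sum over $\underline k \in \R\setminus\set{0}$ converges and is bounded on $\Gamma$.

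For each nonzero lattice vector $\underline k$, the summand is the difference between the potential at $x$ of a unit point charge at $(\underline k,0)$ and that of a unit charge smeared uniformly over the translated cell $Q+\underline k$. By a Taylor expansion (a discrete multipole-type estimate), since $\int_Q \diff\underline y = |Q|$, this difference is $O\big(\mathrm{diam}(Q)^2 \cdot |\nabla^2_{\underline y}|x-(\cdot,0)|^{-1}|\big)$, hence of order $|x-(\underline k,0)|^{-3}$ once $|\underline k|$ is large compared to $|\underline x|$ and $\mathrm{diam}(Q)$. Summing $|x-(\underline k,0)|^{-3} \sim |\underline k|^{-3}$ over the 2D lattice $\R$ converges, uniformly for $\underline x$ in a bounded set; and crucially the bound is \emph{uniform in $x_3$} because $|x-(\underline k,0)|^{-3}\le |\underline x-\underline k|^{-3}$ — adding the $x_3$-component only helps. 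The finitely many small-$|\underline k|$ terms are each bounded: away from the point $(\underline k, 0)$ the summand is manifestly continuous and decaying, and near $(\underline k,0)$ the point-charge singularity $\frac{1}{|x-(\underline k,0)|}$ is not subtracted for $\underline k\ne 0$, so one must check it does not actually blow up there — but it does, so instead I should be more careful: the only genuine singularity of $G$ is at the lattice points $(\underline k,0)$, $\underline k\in\R$, and near such a point $G$ behaves like $\frac{1}{|x-(\underline k,0)|}$. Thus $\psi$ as defined above is singular at every nonzero lattice point. The correct statement must therefore use $\R$-periodicity of $G$: it suffices to prove $\psi\in L^\infty$ on one cell $\Gamma = Q\times\RR$, where the only lattice point is $\underline k = 0$, whose singularity is exactly cancelled by subtracting $\frac1{|x|}$.

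So the clean argument is: restrict attention to $x\in\Gamma$; the $\underline k=0$ term of \eqref{eq:def G} minus $\frac1{|x|}$ equals $-\frac1{|Q|}\int_Q|x-(\underline y,0)|^{-1}\diff\underline y$, which is bounded on $\Gamma$ (integrable singularity in $\underline y$ uniformly in $\underline x\in Q$, decay as $|x_3|\to\infty$); and the $\underline k\ne 0$ terms sum to something bounded on $\Gamma$ by the multipole estimate above, since for $\underline x\in Q$ and $\underline k\ne0$ we have $|x-(\underline y+\underline k,0)|$ bounded below, giving summand $=O(|\underline k|^{-3})$ uniformly in $x_3$ and in $\underline x\in Q$. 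Adding the $-\frac{2\pi}{|Q|}|x_3|$ back is consistent since that term is already displayed separately in \eqref{eq:def G}. The main obstacle is making the multipole/Taylor estimate rigorous and uniform in $x_3$ — one needs the elementary bound $|x - (\underline y+\underline k,0)|\ge |\underline x - \underline y - \underline k| \ge |\underline k| - \mathrm{diam}(Q)$ to control the Hessian of $\underline y\mapsto |x-(\underline y+\underline k,0)|^{-1}$ on $Q$ by $C|\underline k|^{-3}$ uniformly in $x_3\in\RR$, and then invoke $\sum_{\underline k\in\R\setminus\set0}|\underline k|^{-3}<\infty$. I expect no difficulty beyond bookkeeping once this estimate is set up.
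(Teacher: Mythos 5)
Your decomposition of $G$ and of $\psi$ is identical to the paper's, and restricting to $x\in\Gamma$ so that only the $\underline{k}=0$ singularity needs to be cancelled by $1/|x|$ is exactly the right observation. Where you genuinely diverge from the paper is in how the tail sum over $\underline{k}\neq 0$ is controlled. The paper does not argue this from scratch: it splits $\Gamma$ into $Q\times[-\epsilon,\epsilon]$ and its complement, invokes normal convergence of the lattice sum near $\{x_3=0\}$ from \cite[Proposition 3.2]{BB} (giving continuity, hence boundedness on the compact near piece), and invokes the decay bound $\lesssim|x_3|^{-\alpha}$ from \cite[Equation 3.12]{BB} on the far piece. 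You instead propose a self-contained multipole estimate — each summand is $O(|\underline{k}|^{-3})$ uniformly in $x_3$, so the sum converges absolutely and is bounded on all of $\Gamma$ with no near/far dichotomy. This is a legitimate and arguably cleaner route; it essentially reproves the [BB] estimate rather than citing it.

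There is, however, one point you cannot wave off as ``bookkeeping.'' Your Taylor expansion of $\underline{y}\mapsto |x-(\underline{y}+\underline{k},0)|^{-1}$ about $\underline{y}=0$ produces a dipole term $-\nabla_{\underline{y}} g(\underline{k})\cdot\frac{1}{|Q|}\int_Q\underline{y}\,\diff\underline{y}$, of size $O(|\underline{k}|^{-2})$, whose lattice sum in 2D \emph{diverges}. The normalisation $\int_Q\diff\underline{y}=|Q|$ that you invoke only cancels the monopole; to reach $O(|\underline{k}|^{-3})$ you also need the first moment to vanish, $\int_Q\underline{y}\,\diff\underline{y}=0$, i.e.\ the unit cell $Q$ must be chosen with barycenter at the origin. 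That is precisely the convention under which the lattice sum in~\eqref{eq:def G} is absolutely convergent — without it $G$ is only conditionally defined — and it is implicit in the paper (see Figure~\ref{Tux}) and in \cite{BB}. Since you are replacing a citation with a direct proof, you must state this hypothesis and use it; as written, the claimed $O(|\underline{k}|^{-3})$ bound does not follow from what you assert.
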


\noindent Proposition~\ref{Deco-G} is partially proved in Ref.~\onlinecite{BB}. We detail the proof here for consistency. 

\begin{proof}

\added{We start from the formula~\eqref{eq:def G} of the explicit expression of the  periodic Green function $G$ 
\begin{equation}
	G(x)=-\dfrac{2\pi}{\av{Q}}\abs{x_3}+ \sum_{\underline{k}\in \mathcal{R}}\left(
	\dfrac{1}{\abs{{x}-(\underline{k},0)}}-\dfrac{1}{\av{Q}}\int_{Q}  \dfrac{\diff \underline{y}}{\abs{{x}-\left(\underline{y}+\underline{k},0\right)}}
	\right).
	\label{eq:def G proof direct}
\end{equation}
We then isolate the term corresponding to $\underline{k}=0$ from the sum
\begin{align*}
G(x) = -\dfrac{2\pi}{\av{Q}}\abs{x_3} &+ \left( \dfrac{1}{\abs{x-(\underline{0},0)}} - \dfrac{1}{\av{Q}}\int_{Q}  \dfrac{\diff \underline{y}}{\abs{{x}-\left(\underline{y}+\underline{0},0\right)}} \right)\\ & \qquad \qquad \qquad \qquad + \sum_{\underline{k}\in \mathcal{R}^*}\left(	\dfrac{1}{\abs{{x}-(\underline{k},0)}}-\dfrac{1}{\av{Q}}\int_{Q}  \dfrac{\diff \underline{y}}{\abs{{x}-\left(\underline{y}+\underline{k},0\right)}}\right).
\end{align*}
We can thus rewrite $G$ as
\begin{align*}
 G(x) = -\dfrac{2\pi}{\av{Q}}\abs{x_3} + \dfrac{1}{\abs{x}} & - \dfrac{1}{\av{Q}}\int_{Q}  \dfrac{\diff \underline{y}}{\abs{{x}-(\underline{y},0)}} \\
 & \qquad \qquad + \sum_{\underline{k}\in \mathcal{R}^*}\left(
	\dfrac{1}{\abs{{x}-(\underline{k},0)}}-\dfrac{1}{\av{Q}}\int_{Q}  \dfrac{\diff \underline{y}}{\abs{{x}-(\underline{y}+\underline{k},0)}}
	\right).
\end{align*}
Therefore we can write $G(x) = -\frac{2\pi}{|Q|}|x_3| + \frac{1}{|x|} + \psi(x)$, where $\psi(x)$ is the sum of the remaining terms }
$$
\psi(x) = -\dfrac{1}{\av{Q}}\int_{Q}  \dfrac{\diff \underline{y}}{\abs{{x}-(\underline{y},0)}} + \sum_{\underline{k}\in \mathcal{R}^*}\left(
	\dfrac{1}{\abs{{x}-(\underline{k},0)}}-\dfrac{1}{\av{Q}}\int_{Q}  \dfrac{\diff \underline{y}}{\abs{{x}-(\underline{y}+\underline{k},0)}}
	\right).
$$
Let us show that $\psi\in L^\ii(\Gamma)$. The function $\displaystyle x\mapsto \dfrac{1}{\av{Q}}\int_{Q}  \dfrac{\diff \underline{y}}{\abs{{x}-(\underline{y},0)}}$ is continuous on $\Gamma$  and goes to zero as $\av{x_3}\rightarrow \infty$. Moreover, the sum defining $\psi$ converges normally on a neighborhood of $0_{\RR^3}$ (see Ref.~\onlinecite[Proposition 3.2]{BB}). Therefore it is continuous on that neighborhood, which implies that $\psi$ is bounded on $Q\times [-\epsilon,\epsilon]$, for some $\epsilon>0$. From Ref.~\onlinecite[Equation 3.12]{BB}, there exists $C>0$ such that 
	$$ \sum_{\underline{k}\in (\R)^*}\left( 
	\dfrac{1}{\abs{{x}-(\underline{k},0)}}-\dfrac{1}{\av{Q}}\int_{Q}  \dfrac{\diff \underline{y}}{\abs{{x}-(\underline{y}+\underline{k},0)}}
	\right)\leq \dfrac{C}{\av{x_3}^\alpha},\quad \forall\alpha<1.$$
	It follows that \deleted{the} $x\mapsto  \sum_{\underline{k}\in (\R)^*}\left( 
	\dfrac{1}{\abs{{x}-(\underline{k},0)}}-\dfrac{1}{\av{Q}}\int_{Q}  \dfrac{\diff \underline{y}}{\abs{{x}-(\underline{y}+\underline{k},0)}}
	\right) $ is bounded on $x\in Q\times (\RR\setminus  [-\epsilon,\epsilon] )$, thus $\psi$ is also bounded on the same interval.
\end{proof}
As a consequence of the above proposition, we prove useful properties of the potential $\Phi$. Let us introduce some notations. For some domain $\Omega\subset  \RR^d$ and $1\leq p\leq \ii$, we  denote by 
$$L^p_{\rm unif}(\Omega)=\left\{f \in L^p_{\rm loc}(\Omega),\;\forall r>0,\; \sup_{ \text{\replaced{$B(x,r)$}{$B_r$}}\subset\Omega} \norm{f}_{L^p(B(x,r))} < \infty\right\},$$
where $B(x,r)$ is the ball of radius $r$ centered at $x$. For $f\in L^p(\Gamma)$ and $g\in L^q_{\rm per}
(\Gamma)$, the convolution
$$
(g*_\Gamma f)(x)=\int_\Gamma f(y)g(x-y) \diff y
$$
is well-defined in $L^r_{\rm per}(\Gamma)$, where $\frac1p+\frac1q=1+\frac1r$.  
\begin{Pro}
	\label{lemphi}
	Let 
		$$Y_{\per}=\left\{f \in L^1(\Gamma )\cap L^{5/3}(\Gamma),\; \int_{\Gamma} f=0,\; \abs{\cdot} f \in L^1(\Gamma ) \right\}.$$
	The map $$\begin{array}{rll}
		Y_\per & \to & L^\ii(\Gamma) \\
		f&\mapsto & G*_\Gamma f
	\end{array}
	$$
	is \replaced{well-defined}{well defined} and continuous.  Moreover, for $f\in Y_\per$, $G*_\Gamma f$ is the unique solution, up to an additive constant, of the Poisson equation
	\begin{equation}
		\left\{
		\begin{array}{rll}
			\Phi & \in & L^1_\unif(\RR^3)\\
			-\Delta \Phi& 	= & 4\pi f.
		\end{array}
		\right.
		\label{li}
	\end{equation}
\end{Pro}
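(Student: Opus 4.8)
The plan is to use the decomposition $G(x) = -\frac{2\pi}{|Q|}|x_3| + \frac{1}{|x|} + \psi(x)$ from Proposition~\ref{Deco-G} and to treat each of the three contributions to $G *_\Gamma f$ separately. First I would check that the convolution is well defined: since $f \in L^1(\Gamma) \cap L^{5/3}(\Gamma)$ and $\psi \in L^\infty(\Gamma)$, the term $\psi *_\Gamma f$ lies in $L^\infty(\Gamma)$ with norm controlled by $\|\psi\|_{L^\infty}\|f\|_{L^1}$, which is linear and hence continuous in $f$. For the Newtonian part $\frac{1}{|x|} *_\Gamma f$, I would split $\frac{1}{|x|} = \frac{1}{|x|}\mathbf{1}_{|x|\le 1} + \frac{1}{|x|}\mathbf{1}_{|x|>1}$; the first piece is in $L^q(\Gamma)$ for $q < 3$, so convolving with $f \in L^{5/3}$ (taking $q$ slightly below $3$, with $\frac{3}{5} + \frac{1}{q} = 1 + \frac1r$ giving $r = \infty$ in the borderline — more precisely one uses $f\in L^1\cap L^{5/3}$ and interpolation/Young to land in $L^\infty$), while the tail piece $\frac{1}{|x|}\mathbf{1}_{|x|>1} *_\Gamma f$ is bounded using $|x|f \in L^1(\Gamma)$, since $\frac{1}{|x|} \le \frac{1}{|x-y|}$ comparisons together with the weight let one bound the integral uniformly in $x$. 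The genuinely new term is $-\frac{2\pi}{|Q|}|x_3| *_\Gamma f$: here the kernel grows, so one must use the cancellation $\int_\Gamma f = 0$. Writing $(|x_3|*_\Gamma f)(x) = \int_\Gamma (|x_3 - y_3| - |x_3|) f(y)\,dy$ and using $\big||x_3-y_3| - |x_3|\big| \le |y_3| \le |y|$, the bound $|x|f \in L^1(\Gamma)$ gives an $x$-independent bound, and again the dependence on $f$ is linear, hence continuous. Summing the three estimates shows $f \mapsto G *_\Gamma f$ maps $Y_\per$ continuously into $L^\infty(\Gamma)$.

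For the second part, I would verify that $\Phi := G *_\Gamma f$ solves $-\Delta \Phi = 4\pi f$ in the distributional sense. Since $-\Delta G = 4\pi \sum_{k \in \R\times\{0\}} \delta_k$ by definition, formally $-\Delta(G *_\Gamma f) = 4\pi f$ because convolving the periodic delta comb against a $\R$-periodic-in-$\underline{x}$ function reproduces $f$; I would make this rigorous by testing against $\phi \in C_c^\infty(\RR^3)$, moving the Laplacian onto $G$ via Fubini (justified by the integrability established above and by $\psi, |x|^{-1} \in L^1_{\loc}$), and using the distributional identity for $G$. One also needs $\Phi \in L^1_\unif(\RR^3)$: the $L^\infty(\Gamma)$ bound from part one, combined with $\R$-periodicity in the transverse variables, gives a uniform bound on balls, so $\Phi \in L^\infty(\RR^3) \subset L^1_\unif(\RR^3)$.

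The remaining point is uniqueness up to an additive constant. Suppose $\Phi_1, \Phi_2$ both solve \eqref{li}; then $w = \Phi_1 - \Phi_2 \in L^1_\unif(\RR^3)$ is harmonic on $\RR^3$. By elliptic regularity (interior estimates), a harmonic function in $L^1_\unif(\RR^3)$ is bounded, hence constant by Liouville's theorem. This is the step I expect to be the main obstacle to state cleanly, not because it is deep but because one must argue in the full space $\RR^3$ rather than on $\Gamma$: the solution $\Phi$ is not assumed periodic in $x_3$, so one cannot invoke a periodic Liouville theorem; instead one needs the $L^1_\unif$ bound to propagate through mean-value/Harnack-type estimates to yield global boundedness. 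Once boundedness is in hand, classical Liouville closes the argument, and the additive constant is exactly the gauge freedom in $G$ (which is itself defined only up to a constant, consistent with \eqref{eq:def G}).
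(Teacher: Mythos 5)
Your proposal follows essentially the same route as the paper's proof: decompose $G$ via Proposition~\ref{Deco-G}, bound each of the three resulting convolutions separately in $L^\infty$ (a trivial $L^\infty$--$L^1$ bound for the $\psi$ part, a H\"older/Young bound with exponents $(5/3,5/2)$ for the local Newtonian singularity, and the neutrality constraint $\int_\Gamma f = 0$ together with $\bigl| |x_3-y_3|-|x_3| \bigr| \le |y_3|$ for the $|x_3|$ kernel), then obtain uniqueness by observing the difference of two solutions is harmonic and in $L^1_\unif(\RR^3)$, hence bounded by the mean-value property and constant by Liouville. The only substantive addition on your side is the explicit distributional verification that $-\Delta(G*_\Gamma f)=4\pi f$, which the paper leaves implicit; otherwise the arguments coincide step for step.
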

\begin{proof}
 
From Proposition~\ref{Deco-G}, we have $G(x)=\dfrac{1}{\abs{x}}-\dfrac{2\pi}{\av{Q}}\abs{x_3}+\psi(x)$, with $\psi\in L^\infty(\Gamma)$. 
We are thus going to bound the three functions 
	\begin{equation*}
	f*_{\Gamma } \abs{x_3},\; 
		f*_{\Gamma } \frac{1}{\abs{x}} \; \text{and}\;
			f*_{\Gamma }\psi
	\end{equation*}
	in $L^{\infty}(\Gamma )$. First, since $\psi\in L^\infty(\Gamma)$, we have for any $x\in \Gamma$ 
	$$
	\av{\int_\Gamma f(x-y) \psi(y)\diff y}\leq \norm{\psi}_{L^\infty(\Gamma)}\norm{f}_{L^1(x-\Gamma)}.
	$$
	Thus
	$$
	\norm{ f*_{\Gamma }\psi}_{L^\infty(\Gamma)}\leq \norm{\psi}_{L^\infty(\Gamma)}\norm{f}_{L^1(\Gamma)}.
	$$
	We move to ${f}*_{\Gamma }\frac{1}{\abs{x}}$. We have
	\begin{align*}
		\av{f*_\Gamma \frac{1}{\abs{x}}(x)} & \leq \int_{\Gamma }\frac{\abs{f(x-y)}}{\abs{y}}\diff y
		\displaystyle \leq \int_{\Gamma }\abs{f(x-y)}\diff y+ \int_{\Gamma }\frac{\abs{f(x-y)}}{\abs{y}}\mathbbm{1}_{\abs{y}<1}\diff y. 
	\end{align*}
 For the second term, we use H\"{o}lder inequality with $f \in L^{5/3}(\Gamma )$ and $\frac{1}{\abs{y}}\mathbbm{1}_{\abs{y}<1} \in L^{5/2}(\Gamma )$. We conclude that
	\begin{equation*}
		\begin{array}{ll}
			\norm{\frac{1}{\abs{x}}*_{\Gamma }f}_{L^\infty} &\displaystyle \leq \norm{f}_{L^{1}(\Gamma )}+ \norm{f}_{L^{5/3}(\Gamma )}\norm{\frac{1}{\abs{y}}\mathbbm{1}_{\abs{y}<1}}_{L^{5/2}(\Gamma )}. 
		\end{array}
	\end{equation*}
	Regarding the last term, we use the neutrality assumption to write
	$$ \int_{\Gamma } f(x-y)\abs{y_3}\diff y=\int_{\Gamma } f(x-y)(\abs{y_3}-\abs{x_3})\diff y.$$
	Thus, using the triangle inequality and the $\mathcal{R}$-periodicity of $x\mapsto x_3f(x)$ , we obtain
	\begin{align*}
	\av{ \int_{\Gamma } f(x-y)\abs{y_3}\diff y }\leq \int_{\Gamma }\av{ f(x-y)}\abs{x_3-y_3}\diff y =\int_{\Gamma } \av{f(y)}\abs{y_3}\diff y .
	\end{align*}
%
%
\added{We now prove that  $ G*_{\Gamma }f $ is the unique solution of~\eqref{li}. The function $G*_{\Gamma }f$ is  $\mathcal{R}$-periodic, therefore it is bounded over $\RR^3$ and  also in $L^1_\unif(\RR^3)$. It satisfies equation ~\eqref{li} by the definition of $G$. Let $\Phi$ be  another solution. Then  $h=\Phi-G*_{\Gamma }f\in L^1_\unif(\RR^3)$ is a harmonic function over $\mathbb{R}^3$.}  By the mean value Theorem for \deleted{a} harmonic functions, we have that \added{for any $x \in \RR^3$,
	$$ h(x)=\frac3{4\pi}\int_{B(x,1)} h(y)\diff y.$$
    Therefore,
	$$ |h(x)| \leq \frac3{4\pi} \sup_{z\in \RR^3} \int_{B(z,1)}|h(y)|\diff y. $$
	Since $h \in L^1_\unif(\RR^3)$, the supremum is finite, which implies $h$ is bounded over $\RR^3$.} By Liouville's Theorem, we conclude that $h$ is constant. 
\end{proof}

\begin{Cor}\label{D_G bound} Let $f\in Y_{per}$, then $\nabla (G*_\Gamma f)\in \bra{L^2(\Gamma)}^3$ and there exists  $C>0$, such that
	$$D_G(f,f)= \added{\text{$(4\pi)^{-1}$}} \norm{\nabla (G*_\Gamma f)}^2_{L^2(\Gamma)}\leq C \norm{f}_{L^{1}(\Gamma)}\left(\norm{f}_{L^{1}(\Gamma)}+\norm{f}_{L^{5/3}(\Gamma)}+\norm{|x|f}_{L^{1}(\Gamma)}\right).$$
\end{Cor}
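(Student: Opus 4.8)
The whole statement reduces to the scalar potential $\Phi:=G*_\Gamma f$ attached to $f\in Y_{per}$. By Proposition~\ref{lemphi}, $\Phi\in L^\ii(\Gamma)$, and since $f\in L^1(\Gamma)$ Fubini's theorem gives
\[
D_G(f,f)=\int_\Gamma\int_\Gamma f(x)f(y)G(x-y)\diff x\diff y=\int_\Gamma f(x)\,\Phi(x)\,\diff x .
\]
First I would read off from the proof of Proposition~\ref{lemphi} the quantitative estimate it actually establishes: splitting $G=\frac{1}{\av{x}}-\frac{2\pi}{\av{Q}}\av{x_3}+\psi$ and summing the three bounds obtained there for $f*_\Gamma\psi$, $f*_\Gamma\frac{1}{\av{x}}$ and $f*_\Gamma\av{x_3}$,
\[
\norm{\Phi}_{L^\ii(\Gamma)}\leq C\Big(\norm{f}_{L^1(\Gamma)}+\norm{f}_{L^{5/3}(\Gamma)}+\norm{\av{x_3}f}_{L^1(\Gamma)}\Big),
\]
so that $\av{D_G(f,f)}\leq\norm{f}_{L^1(\Gamma)}\norm{\Phi}_{L^\ii(\Gamma)}$ already yields the asserted upper bound. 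It then remains to show $\nabla\Phi\in(L^2(\Gamma))^3$ together with the identity $D_G(f,f)=\norm{\nabla\Phi}_{L^2(\Gamma)}^2$ (which in particular gives $D_G(f,f)\geq0$).

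For this, recall from Proposition~\ref{lemphi} that $-\Delta\Phi=4\pi f$; since $f\in L^{5/3}_{\loc}(\RR^3)$ and $\Phi\in L^1_{\loc}(\RR^3)$, interior elliptic ($L^{5/3}$) estimates give $\Phi\in W^{2,5/3}_{\loc}(\RR^3)$, hence $\nabla\Phi\in L^2_{\loc}$, and all the integrations by parts below are licit on slabs $Q\times[-2R,2R]$, the lateral contributions vanishing by $\R$-periodicity in $\un x$. Let $\eta_R$ be a cutoff, smooth in $x_3$, with $\eta_R\equiv1$ on $[-R,R]$, supported in $[-2R,2R]$ and $\av{\eta_R'}\leq C/R$. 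Testing $-\Delta\Phi=4\pi f$ against $\eta_R^2\Phi$ and using Young's inequality,
\[
\int_\Gamma\eta_R^2\av{\nabla\Phi}^2\leq\tfrac12\int_\Gamma\eta_R^2\av{\partial_3\Phi}^2+2\norm{\Phi}_{L^\ii(\Gamma)}^2\int_\Gamma(\eta_R')^2+4\pi\norm{\Phi}_{L^\ii(\Gamma)}\norm{f}_{L^1(\Gamma)} ;
\]
since $\int_\Gamma(\eta_R')^2\leq C\av{Q}/R$, the first term on the right is absorbed into the left, and one obtains a bound on $\int_\Gamma\eta_R^2\av{\nabla\Phi}^2$ uniform in $R$; monotone convergence then yields $\nabla\Phi\in(L^2(\Gamma))^3$.

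With $\nabla\Phi\in L^2(\Gamma)$ in hand, I would re-run the computation with $\eta_R$ in place of $\eta_R^2$, which gives $\int_\Gamma\eta_R\av{\nabla\Phi}^2=-\int_\Gamma\eta_R'\,\Phi\,\partial_3\Phi+4\pi\int_\Gamma\eta_R\,f\,\Phi$. By Cauchy--Schwarz on the shell $Q\times\set{R<\av{x_3}<2R}$ one has $\av{\int_\Gamma\eta_R'\,\Phi\,\partial_3\Phi}\leq C\,\av{Q}^{1/2}R^{-1/2}\norm{\Phi}_{L^\ii(\Gamma)}\norm{\nabla\Phi}_{L^2(\Gamma)}\to0$, while $\int_\Gamma\eta_R f\Phi\to\int_\Gamma f\Phi$ by dominated convergence; passing to the limit gives $\norm{\nabla\Phi}_{L^2(\Gamma)}^2=4\pi\int_\Gamma f\Phi=4\pi\,D_G(f,f)$, i.e.\ the asserted identity (up to the normalization inherited from $-\Delta G=4\pi\sum_k\delta_k$), and together with the first paragraph this completes the proof. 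The only genuinely delicate point, and the one I expect to require care to write cleanly, is exactly this passage to the limit in the unbounded $x_3$ direction: \emph{a priori} $\nabla\Phi$ need not lie in $L^2(\Gamma)$ and the boundary flux at $x_3=\pm\ii$ has to be controlled, which is what the two-step cutoff device (first an $\eta_R^2$-test function with an absorption to gain $\nabla\Phi\in L^2(\Gamma)$, then an $\eta_R$-test function to kill the flux) is designed to handle; everything else is Fubini, H\"older, and the estimates already contained in Propositions~\ref{Deco-G} and~\ref{lemphi}.
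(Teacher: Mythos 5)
Your proof follows essentially the same route as the paper's own (one-line) argument: express $D_G(f,f)=\int_\Gamma f\,\Phi=\int_\Gamma\abs{\nabla\Phi}^2$ by integration by parts, and then apply the $L^\infty$ bound on $\Phi=G*_\Gamma f$ already established in the proof of Proposition~\ref{lemphi}. The extra work you do — the two-step cutoff in $x_3$ to first secure $\nabla\Phi\in L^2(\Gamma)$ and then kill the flux at $\pm\infty$ — is a legitimate filling-in of a step the paper leaves implicit, and your parenthetical flag of the $4\pi$ normalization (strictly $D_G(f,f)=\tfrac{1}{4\pi}\norm{\nabla\Phi}_{L^2}^2$) is a real, if harmless, sloppiness shared by both the paper's statement and its proof.
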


\begin{proof} We have 
		\begin{align*}
	D_G(f,f)= \int_\Gamma (G*_\Gamma f)\times f=\added{\text{$-(4\pi)^{-1}$}}\int_\Gamma (G*_\Gamma f)\times \Delta (G*_\Gamma f)
	\end{align*} 
    

\added{In Ref.~\onlinecite[Section 3.2.1]{BB}, it is shown using a hybrid Fourier transform, that is discrete in $\underline{x}$ and continuous in $x_3$, that
$$
D_G(f,f) = \sum_{k \in \R} \int_{R} \frac{\left|\widehat{f}(k,\xi)\right|^2}{ \pi (|k|^2 + \xi^2 )}  d\xi.
$$
Hence, 
$$
D_G(f,f) = (4\pi)^{-1} \sum_{k \in \R} \int_{R} \left| \frac{ (i 2 \pi (k,\xi) ) \widehat{f}(k,\xi) }{ \pi (|k|^2 + \xi^2 )} \right|^2 d\xi = (4\pi)^{-1}\int_{\Gamma} | \widehat{\nabla(G *_{\Gamma} f )}| ^2 d x .
$$
By the Planchrel-Parseval property\cite{BB} of the hybrid Fourier transform, we conclude that
\begin{align*}
	D_G(f,f)=\text{$(4\pi)^{-1}$} \int_\Gamma \av{\nabla(G*_\Gamma f)}^2.
\end{align*} 
}
Besides, by the previous proposition $ G*_{\Gamma }f \in L^{\infty}(\Gamma)$ and 
$$\norm{G*_{\Gamma }f}_{L^{\infty}(\Gamma)}\leq (\norm{\psi}_{L^\ii(\Gamma)}+1)\norm{f}_{L^1(\Gamma)}+ \norm{f}_{L^{5/3}(\Gamma )}\norm{\frac{1}{\abs{y}}\mathbbm{1}_{\abs{y}<1}}_{L^{5/2}(\Gamma )}+\norm{\abs{x}f}_{L^{1}(\Gamma )}\leq C\norm{f}_{Y_{ \per}},$$
 which proves the inequality stated in the corollary.

\end{proof}

\section{Homogenization procedure: proof of Theorem~\ref{Th-H} }\label{sec:homo}
This section is devoted to the proof of Theorem~\ref{Th-H}. The strategy of the proof is as follows. We start by proving the convergence of the nuclear densities \replaced{$(m_N)_N$}{$(m_N)$}.  Indeed, when $N$ increases, the nuclear density $m_N$ becomes more homogeneous and the sequence \replaced{$(m_N)_N$}{$(m_N)$} converges to the 2D homogeneous density $m_0(x_3)=\frac{1}{\av{Q}} \int_Q m(x)\diff \underline{x}$ (see Lemma~\ref{pro:cv-mN}). Electronic densities $\rho_N$ are uniformly bounded \replaced{w.r.t.}{with respect to} $N$. We can thus extract convergent subsequences (see Proposition~\ref{prop:cv-rhoN}). Both convergences give the upper bound \replaced{$I_0 \leq \liminf I_N$}{$I \leq \liminf I_N$}. 
To prove the lower bound $I_0 \geq \limsup I_N$, we use $\rho_0$ as a test function. The proof
is divided into four steps. 

\smallskip
\paragraph*{Step 1: Properties of the sequence \replaced{$(m_N)_N$}{$(m_N)$}.}

We state below two properties of the sequence \replaced{$(m_N)_N$}{$(m_N)$}. 

\begin{Lem}\label{Neutrality} 
For $p\geq 1$ and $f\in L^1_\loc(\RR)$  such that $x\mapsto f(x_3)m^p(x)\in L^1(\Gamma)$, we have
$$\displaystyle\int_Q m_N^p(\underline{x},x_3)\diff \underline{x}= \int_Q m^p(\underline{x},x_3)\diff \underline{x}$$
and 
	  $$\displaystyle\int_\Gamma f(x_3)m_N^p(x)\diff x=\int_\Gamma f(x_3)m^p(x)\diff x.$$
\end{Lem}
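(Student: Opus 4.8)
The statement is a direct consequence of the defining formula $m_N(x_1,x_2,x_3) = m(Nx_1, Nx_2, x_3)$ together with the $\R$-periodicity of $m$ in its first two variables, so the plan is simply to carry out the change of variables carefully. For the first identity I would fix $x_3$ and set $g(\underline{x}) := m^p(\underline{x}, x_3)$, which is $\R$-periodic in $\underline{x}$ since $m$ is. Then $\int_Q m_N^p(\underline{x}, x_3)\diff\underline{x} = \int_Q g(N\underline{x})\diff\underline{x}$, and after the substitution $\underline{y} = N\underline{x}$ (with Jacobian $N^{-2} = \av{Q}/\av{NQ}$, where $NQ$ is a fundamental domain of $\R$ tiled by $N^2$ translates of $\frac1N\R$-cells) this becomes $\frac{1}{N^2}\int_{NQ} g(\underline{y})\diff\underline{y}$. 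Since $g$ is $\R$-periodic and $NQ$ is a union of $N^2$ translates of unit cells $Q$ of $\R$, the integral of $g$ over $NQ$ equals $N^2 \int_Q g$, so the factors cancel and we get $\int_Q g(\underline{y})\diff\underline{y} = \int_Q m^p(\underline{y}, x_3)\diff\underline{y}$, which is the first claim. One should note here that $Q$ is defined only up to choice of fundamental domain, and the argument is cleanest if one observes that integrating an $\R$-periodic function over \emph{any} fundamental domain gives the same value; alternatively one can take $Q$ to be the parallelepiped spanned by $a_1, a_2$ so that $NQ$ is literally tiled by $N^2$ translates of $Q$.

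For the second identity I would apply Fubini: since $f(x_3) m^p(x) \in L^1(\Gamma)$ and $m_N^p$ differs from $m^p$ only by rescaling the (periodic) horizontal variables while leaving $x_3$ and the $x_3$-support untouched, the function $f(x_3) m_N^p(x)$ is also in $L^1(\Gamma)$, and
\[
\int_\Gamma f(x_3) m_N^p(x)\diff x = \int_\RR f(x_3)\left(\int_Q m_N^p(\underline{x}, x_3)\diff\underline{x}\right)\diff x_3 = \int_\RR f(x_3)\left(\int_Q m^p(\underline{x}, x_3)\diff\underline{x}\right)\diff x_3,
\]
using the first identity in the inner integral, and this last expression is $\int_\Gamma f(x_3) m^p(x)\diff x$ by Fubini again. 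The integrability of $f(x_3)m_N^p$ needed to justify Fubini follows from the first identity applied with the absolute values (i.e. to $\av{f}$ and $m^p \geq 0$), which shows $\int_\Gamma \av{f(x_3)} m_N^p = \int_\Gamma \av{f(x_3)} m^p < \infty$.

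There is no real obstacle here; the only thing to be careful about is the bookkeeping in the change of variables — making sure that the Jacobian factor $N^{-2}$ is exactly compensated by the fact that $NQ$ contains $N^2$ periodicity cells of the $\R$-periodic integrand, and that one is consistent about which fundamental domain $Q$ denotes. A one-line remark that the value of the integral of an $\R$-periodic function over a fundamental domain is independent of the choice of fundamental domain removes any ambiguity. I would present the first identity with the explicit substitution and then get the second for free by Fubini as above.
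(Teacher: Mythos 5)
Your proof is correct and follows essentially the same route as the paper: a change of variables $\underline{y}=N\underline{x}$ giving the factor $N^{-2}$, compensated by the $\R$-periodicity of $m$ which makes $\int_{NQ} m^p = N^2\int_Q m^p$, and then Fubini for the second identity. You are somewhat more careful than the paper about the choice of fundamental domain and the justification of Fubini, but there is no substantive difference in approach.
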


\begin{proof}
	We have
	$$\int_{Q} m_N^p(x_1,x_2,x_3)\diff x_1 \diff x_2 =	\int_{Q} m^p(Nx_1,Nx_2,x_3)\diff x_1 \diff x_2= \frac{1}{N^2}\int_{NQ} m^p(y_1,y_2,x_3)\diff y_1 \diff y_2.$$
%
As $m$ is $Q$-periodic, then $\displaystyle\int_{NQ} m^p(y_1,y_2,x_3)\diff y_1 \diff y_2=N^2\int_{Q} m^p(y_1,y_2,x_3)\diff y_1 \diff y_2$. Hence, the first claim is proved.  The second claim easily follows.
\end{proof}

\begin{Rem} 
By the second point of Lemma~\ref{Neutrality}, with $f=1$ and $p=1$, we have that $\int_\Gamma m_N=\int_\Gamma m= \int_\Gamma m_0$, so that any admissible state for $I_N$ is also an admissible state for $I_0$, and vice versa. 
\end{Rem}

\begin{Lem}
	The sequence \replaced{$(m_N)_N$}{$(m_N)$}  converges to $m_0$ weakly in $L^p(\Gamma)$ \replaced{for all }{$\forall$} $ 1\leq p < +\ii$. 
	\label{pro:cv-mN}
\end{Lem}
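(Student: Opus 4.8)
This is the classical statement that a periodic function oscillating at scale $1/N$ converges weakly to its cell average; the only points requiring care are the non-reflexivity of $L^1$ and the fact that $\Gamma$ is unbounded in the $x_3$-direction, and both are handled using that $m$ has compact support in $x_3$.

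\emph{Uniform bounds.} Let $R>0$ be such that $m(\underline x,x_3)=0$ for $\abs{x_3}>R$. Then every $m_N$ and $m_0$ is supported in the fixed compact set $K:=\overline Q\times[-R,R]$, which has finite Lebesgue measure, and $\norm{m_N}_{L^\infty}\le\norm{m}_{L^\infty}$, $\norm{m_0}_{L^\infty}\le\norm{m}_{L^\infty}$ for all $N$. Applying Lemma~\ref{Neutrality} with exponent $p$ and $f\equiv1$ gives $\norm{m_N}_{L^p(\Gamma)}=\norm{m}_{L^p(\Gamma)}$, while Jensen's inequality gives $\norm{m_0}_{L^p(\Gamma)}\le\norm{m}_{L^p(\Gamma)}<\infty$. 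Hence $(m_N)$ is bounded in $L^p(\Gamma)$ for every $1\le p<\infty$.

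\emph{Reduction.} Fix $1\le p<\infty$ and $g\in L^{p'}(\Gamma)$. Since $m_N-m_0$ vanishes outside $K$ and $\abs K<\infty$, we have $g\mathbbm{1}_K\in L^1(K)$ and $\int_\Gamma(m_N-m_0)g=\int_K(m_N-m_0)\,g\mathbbm{1}_K$; so it suffices to prove that $m_N\to m_0$ weakly-$*$ in $L^\infty(K)=\bra{L^1(K)}'$. Because $(m_N)$ is bounded in $L^\infty(K)$, this follows once $\int_\Gamma(m_N-m_0)h\to0$ for every $h$ in the dense subset of $L^1(K)$ consisting of continuous functions supported in the interior of $K$.

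\emph{Periodic averaging.} For such an $h$, supported in $Q\times[-R,R]$,
\[
\int_\Gamma(m_N-m_0)h=\int_{-R}^{R}\Big(\int_Q\big(m(N\underline x,x_3)-m_0(x_3)\big)h(\underline x,x_3)\,\diff\underline x\Big)\diff x_3 .
\]
For each fixed $x_3$, $\underline x\mapsto m(\underline x,x_3)$ is $\R$-periodic, bounded by $\norm m_{L^\infty}$, with mean $m_0(x_3)$ over $Q$. Tiling $Q$ by the $\sim N^2$ translates of $\tfrac1N$ of a fundamental cell of $\R$ (up to a boundary layer of measure $O(1/N)$), on each of which $m(N\,\cdot\,,x_3)$ has integral equal to the cell measure times $m_0(x_3)$ by periodicity, and $h(\cdot,x_3)$ varies by at most $\omega_h(1/N)$ by uniform continuity, one obtains the standard estimate $\big|\int_Q(m(N\underline x,x_3)-m_0(x_3))h(\underline x,x_3)\,\diff\underline x\big|\le C\big(\omega_h(1/N)+1/N\big)$, where $\omega_h$ is the modulus of continuity of $h$ in $\underline x$ and $C$ depends only on $\norm m_{L^\infty}$, $\abs Q$ and $\norm h_{L^\infty}$; in particular this bound is uniform in $x_3\in[-R,R]$. (Equivalently, expand $m(\cdot,x_3)$ in its Fourier series, the zeroth coefficient being $m_0(x_3)$ and the others oscillating away by Riemann--Lebesgue.) Integrating over $x_3\in[-R,R]$ gives $\big|\int_\Gamma(m_N-m_0)h\big|\le 2RC\big(\omega_h(1/N)+1/N\big)\to0$.

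\emph{Conclusion and main difficulty.} Combining the last two steps, $\int_\Gamma(m_N-m_0)g\to0$ for all $g\in L^{p'}(\Gamma)$ and all $1\le p<\infty$, which is the claim. There is no serious obstacle here: the heart of the matter is the classical periodic averaging lemma. The one delicate point is the case $p=1$, where weak convergence must be tested against $L^\infty(\Gamma)$ rather than against a reflexive dual, so that a density-plus-boundedness argument does not immediately apply; this is precisely what forces us to exploit the uniform compact support of $m$ in $x_3$ to confine everything to the finite-measure set $K$ and argue via weak-$*$ convergence in $L^\infty(K)$.
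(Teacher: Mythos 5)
Your proof is correct, and it takes a genuinely more elementary and self-contained route than the paper's. The paper cites the standard result that $m(N\,\cdot\,,x_3)$ converges weakly in $L^p(Q)$ to its cell average for each fixed $x_3$ (invoking a reference), tests against tensor products $\phi(\underline{x})\psi(x_3)$, dominates by $\norm{\phi}_{L^q(Q)}\av{\psi(x_3)}\norm{m(\cdot,x_3)}_{L^p(Q)}$, and concludes via dominated convergence together with density of $L^q(Q)\otimes L^q(\RR)$ in $L^q(\Gamma)$. You instead derive the oscillatory averaging from scratch via the $\frac1N\R$-tiling of $Q$ and the modulus of continuity of the test function, and — more importantly — you localise to the fixed finite-measure set $K=\overline Q\times[-R,R]$ and phrase everything as weak-$*$ convergence in $L^\infty(K)$, using only boundedness of $(m_N)$ in $L^\infty$ and density of $C_c$ in $L^1(K)$. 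This buys you a cleaner treatment of $p=1$: the paper's density argument, read literally with $q=\infty$, needs the same compact-support observation you make (since $L^\infty(Q)\otimes L^\infty(\RR)$ is not norm-dense in $L^\infty(\Gamma)$, one must either restrict to $K$ and test against $L^1(K)$ as you do, or note that weak $L^2$ convergence plus compact support of $m_N-m_0$ already yields the $L^1$-weak claim by Hölder). Your explicit flagging of this as ``the one delicate point'' is exactly right. One small simplification you could make: since $Q$ is tiled \emph{exactly} by the $N^2$ translates of $\frac1N Q$, there is in fact no boundary layer, and your estimate reduces to $\abs{\int_Q(m(N\underline{x},x_3)-m_0(x_3))h(\underline{x},x_3)\,\diff\underline{x}}\le 2\abs{Q}\,\norm{m}_{L^\infty}\,\omega_h(1/N)$ with no extra $1/N$ term.
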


\begin{proof}
From the $Q$-periodicity of $m$, we have for any $x_3\in \RR$ (see for instance Ref.~\onlinecite{Luk})
\begin{equation}\label{eq:w-cv}
m_N(\cdot,\cdot,x_3) \rightharpoonup \added{ m_0(x_3) }
        \text{\deleted{ $m_0(\cdot,\cdot,x_3)$ }} 
		\quad \mbox{ in } L^p(Q), \quad \text{for all  } 1\leq p< \ii.
\end{equation}
\replaced{For $\phi\in L^q(\Gamma)$}{$\phi\in L^q(Q)$ and $\psi\in L^q(\RR)$}, with $\frac{1}{q}+\frac{1}{p}=1$, we have
\deleted{$f_N(x_3):=\displaystyle\psi(x_3)\int_Q  m_N(x)\phi(\underline{x})\diff  \underline{x}\to f(x_3):= m_0(x_3)\psi(x_3)\int_Q \phi(\underline{x})\diff \underline{x}$ a.e. and, }
\added{$f_N(x_3):=\displaystyle\int_Q  m_N(x)\phi(\underline{x}, x_3)\diff  \underline{x}\to f(x_3):= m_0(x_3)\int_Q \phi(\underline{x}, x_3)\diff \underline{x}$ a.e. and, by Hölder's inequality, we obtain
}

\deleted{
$\hspace{2cm}
    \av{f_N(x_3)} &\leq \norm{\phi}_{L^q(Q)}\av{\psi(x_3)} \bra{\int_Q \av{m_N(\underline{x},x_3)}^p\diff \underline{x}}^{1/p}$
    \\
    $
    \hspace{3cm}&= \norm{\phi}_{L^q(Q)}\av{\psi(x_3)} \bra{\int_Q\av{m(\underline{x},x_3)}^p\diff \underline{x}}^{1/p} \in L^1(\mathbb{R}).
$
}

\added{
\begin{align}
\begin{split} \label{eq:domination}
    \av{f_N(x_3)} &\leq \norm{\phi(\cdot,\cdot,x_3)}_{L^q(Q)} \bra{\int_Q \av{m_N(\underline{x},x_3)}^p\diff \underline{x}}^{1/p} \\
                 &= \norm{\phi(\cdot,\cdot,x_3)}_{L^q(Q)}\bra{\int_Q\av{m(\underline{x},x_3)}^p\diff \underline{x}}^{1/p} \in L^1(\mathbb{R}).
\end{split}
\end{align}
}

Thus, by the dominated convergence theorem, 

\deleted{\text{
$ \int_\RR f_N(x_3)\diff x_3=\int_\Gamma m_N(x)\phi(\underline{x})\psi(x_3)\diff x \to \int_\RR f(x_3)\diff x_3= \int_\Gamma m_0(x_3)\phi(\underline{x})\psi(x_3)\diff x.$
}}

\added{	$$ \int_\RR f_N(x_3)\diff x_3=\int_\Gamma m_N(x)\phi(\underline{x}, x_3)\diff x \to \int_\RR f(x_3)\diff x_3= \int_\Gamma m_0(x_3)\phi(\underline{x}, x_3)\diff x. 
$$}

\deleted{By the density of $L^q(Q)\otimes L^q(\RR)$ in $L^q(\Gamma)$, the proof is complete.}

\end{proof}

\paragraph*{Step 2: Convergence of electronic densities.}
The following proposition gives uniform bounds on  quantities of interest \replaced{w.r.t.}{with respect to} $N$.

\begin{Pro}\label{apriori}
There exist various constants $C$ independent of $N$ such that the following bounds hold.
\begin{enumerate}[i.]
    \item\label{II} $\displaystyle I_N\leq C$,
    \item\label{III} $\displaystyle\norm{ \sqrt{\rho_N}}_{H^1(\Gamma)} \leq C$,
    \item\label{IV} $\displaystyle  \norm{\rho_N}_{L^p({\Gamma })} \leq C $ for all $1\leq p \leq 3$, 
    \item\label{V} $ \displaystyle D_G(\rho_N-m_N,\rho_N-m_N) \leq C $.
\end{enumerate}
\end{Pro}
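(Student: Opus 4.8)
The plan is to derive the whole proposition from a single $N$-uniform upper bound on the ground state energy $I_N=\E^{m_N}_{\per}(\rho_N)$, which I obtain by plugging the limiting ground state $\rho_0$ into the minimization problem~\eqref{TFW-per} with nuclear density $m_N$. The other three bounds then fall out by comparing the (non-negative) summands of the energy and invoking Corollary~\ref{D_G bound} and a Sobolev embedding.

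First I would record a preliminary fact used repeatedly: that $\rho_N-m_N$ and $m_N-\rho_0$ both lie in the space $Y_{\per}$ of Proposition~\ref{lemphi}, so that Corollary~\ref{D_G bound} applies and $D_G(f,f)=\|\nabla(G*_\Gamma f)\|^2_{L^2(\Gamma)}\geq 0$ for these $f$. Indeed $\rho_N\geq0$ with $\int_\Gamma\rho_N=\int_\Gamma m_N=\int_\Gamma m$ (Lemma~\ref{Neutrality}), hence $\rho_N\in L^1(\Gamma)$ with fixed norm; since $\sqrt{\rho_N}\in H^1_{\per}(\Gamma)$, the Sobolev embedding $H^1(\Gamma)\hookrightarrow L^6(\Gamma)$ on $\Gamma\simeq\mathbb{T}^2\times\RR$ gives $\rho_N\in L^3(\Gamma)\subset L^{5/3}(\Gamma)$; neutrality of $\rho_N-m_N$ holds by Lemma~\ref{Neutrality}; and $|x|(\rho_N-m_N)\in L^1(\Gamma)$ because $m_N$ has compact $x_3$-support and, by Theorem~\ref{Th2}, $\rho_N\leq C(1+|x_3|^{3/2})^{-2}$ for $|x_3|>1$, while $|x|\lesssim 1+|x_3|$ on $\Gamma$. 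The same verification, now using Theorem~\ref{thf} together with $u_0\in H^1(\RR)\hookrightarrow L^\infty(\RR)$, puts $m_N-\rho_0$ in $Y_{\per}$; it also shows $\sqrt{\rho_0}\in X_{\per}$ (as $(1+|t|)\,u_0(t)^2$ is integrable and $u_0\in H^1$), so $\rho_0$ is admissible for $I_N$.

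For part~\ref{II}, testing~\eqref{TFW-per} with $\rho_0$ and using that $\rho_0$ depends on $x_3$ only,
\[
I_N\leq\E^{m_N}_{\per}(\rho_0)=|Q|\int_\RR|u_0'|^2+|Q|\int_\RR\rho_0^{5/3}+\tfrac12 D_G(m_N-\rho_0,m_N-\rho_0),
\]
where the first two terms are $N$-independent. For the third, Corollary~\ref{D_G bound} bounds it by $C\|m_N-\rho_0\|_{L^1(\Gamma)}\bigl(\|m_N-\rho_0\|_{L^1(\Gamma)}+\|m_N-\rho_0\|_{L^{5/3}(\Gamma)}+\||x_3|(m_N-\rho_0)\|_{L^1(\Gamma)}\bigr)$, and the key point is that by Lemma~\ref{Neutrality} (applied with $f\equiv1,\,p=1$; $f\equiv1,\,p=5/3$; and $f(x_3)=|x_3|,\,p=1$) the quantities $\|m_N\|_{L^1(\Gamma)}$, $\|m_N\|_{L^{5/3}(\Gamma)}$, $\||x_3|m_N\|_{L^1(\Gamma)}$ equal their counterparts for $m$ and so are frozen in $N$, while the $\rho_0$-contributions are fixed; hence $D_G(m_N-\rho_0,m_N-\rho_0)\leq C$ and $I_N\leq C$. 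Then, writing $I_N=\int_\Gamma|\nabla\sqrt{\rho_N}|^2+\int_\Gamma\rho_N^{5/3}+\tfrac12 D_G(m_N-\rho_N,m_N-\rho_N)$ with all three summands $\geq0$ by the preliminary step, part~\ref{II} forces each of $\int_\Gamma|\nabla\sqrt{\rho_N}|^2$, $\int_\Gamma\rho_N^{5/3}$ and $D_G(\rho_N-m_N,\rho_N-m_N)$ to be $\leq C$: the last is part~\ref{V}, and combining $\int_\Gamma|\nabla\sqrt{\rho_N}|^2\leq C$ with $\|\sqrt{\rho_N}\|_{L^2(\Gamma)}^2=\int_\Gamma\rho_N=\int_\Gamma m$ gives part~\ref{III}. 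Finally, for part~\ref{IV}, Sobolev gives $\|\rho_N\|_{L^3(\Gamma)}=\|\sqrt{\rho_N}\|_{L^6(\Gamma)}^2\leq C\|\sqrt{\rho_N}\|_{H^1(\Gamma)}^2\leq C$, and interpolation between $L^1(\Gamma)$ and $L^3(\Gamma)$ yields $\|\rho_N\|_{L^p(\Gamma)}\leq C$ for all $1\leq p\leq3$.

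I expect nothing here to be deep; the load-bearing points — and the only real subtleties — are (i) the $Y_{\per}$-memberships of $\rho_N-m_N$ and $m_N-\rho_0$, which is where the pointwise decay from Theorems~\ref{Th2} and~\ref{thf} genuinely enters and which legitimizes both Corollary~\ref{D_G bound} and the positivity of $D_G$, and (ii) the observation via Lemma~\ref{Neutrality} that the $L^1$, $L^{5/3}$ and first-moment norms of $m_N$ do not move with $N$, which is exactly what makes the test-density estimate uniform. Everything else is routine bookkeeping with Sobolev and interpolation.
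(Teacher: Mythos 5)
Your argument is correct and follows essentially the paper's route: plug a fixed admissible density into $\E^{m_N}_\per$ to bound $I_N$, observe that the three summands are non-negative, and invoke Corollary~\ref{D_G bound} together with Lemma~\ref{Neutrality} (the $L^1$, $L^{5/3}$ and first-moment norms of $m_N$ being independent of $N$) to control the $D_G$ term uniformly; III--V then follow by positivity, Sobolev embedding $H^1(\Gamma)\hookrightarrow L^6(\Gamma)$ and interpolation. The only deviation is the choice of test density --- the paper simply uses $m$ itself (admissible since $\int_\Gamma m=\int_\Gamma m_N$), which is a bit more economical because it avoids invoking Theorem~\ref{thf} and the verification that $\sqrt{\rho_0}\in X_\per$ at this stage --- but your use of $\rho_0$ works just as well.
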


\begin{proof}
As $\rho_N$ is the minimizer of $I_N$ and since $m$ is an admissible test function for $\E^{m_N}_\per$, we have 
$$I_N=\E_{per}^{m_N}(\rho_N)\leq \E_{per}^{m_N}(m)=\int_{\Gamma } \abs{\nabla \sqrt{m}}^2+\int_{\Gamma } m^{5/3}+\frac12 D_G(m-m_N,m-m_N).$$
By Corollary \ref{D_G bound} applied to $f=m-m_N$ and Lemma~\ref{Neutrality}, we have 
$$
\begin{array}{lll}
	D_G(m-m_N,m-m_N) &\leq & C\norm{m-m_N}_{L^{1}(\Gamma)}\left(\norm{m-m_N}_{L^{1}(\Gamma)}+\norm{m-m_N}_{L^{5/3}(\Gamma)} \right. \\ 
	& & \left. +\norm{|x|(m-m_N)}_{L^{1}(\Gamma)}\right) \\
	& \leq & C\norm{m}_{L^{1}(\Gamma)}\left(\norm{m}_{L^{1}(\Gamma)}+\norm{m}_{L^{5/3}(\Gamma)}+\norm{|x|m}_{L^{1}(\Gamma)}\right), 
\end{array}$$
\added{where, $C$ denotes a generic positive constant that may differ from one line to the other.} \added{The finiteness of $\norm{\nabla \sqrt{m}}_{L^{2}(\Gamma)}$, $\norm{m}_{L^{5/3}(\Gamma)}$ and $\norm{|x|m}_{L^{1}(\Gamma)}$ follows directly from the assumed regularity and compact support of $m$ in $x_3$.} Points~{\it \ref{III}--\ref{V}} are easily deduced from \deleted{point} point~{\it\ref{II}}.
\end{proof}

\begin{Pro}
There exists a non-negative function $\rho_0\in L^1(\Gamma)\cap L^{5/3}(\Gamma)$ such that, up to a subsequence,
\begin{itemize}
\item  \replaced{$(\rho_N)_N$}{$(\rho_N)$} converges to  $\rho_0$ strongly in $L^1(\Gamma)$ and $L^p_{loc}(\Gamma ) $ for all $1\leq  p \leq 3$,  weakly in $L^p(\Gamma)$,  and almost everywhere on $\mathbb{R}^3$,
\item $(\nabla \sqrt{\rho_N})$ converges to $\nabla \sqrt{\rho_{0}}$ weakly in $(L^2(\Gamma ))^3$
\item $\rho_0-m_0\in Y_\per$ 
\item The sequence $\nabla\Phi_N=\nabla\left(G*_\Gamma (\rho_N-m_N)\right)$ converges to $\nabla\Phi_0=\nabla\left(G*_\Gamma (\rho_0-m_0)\right)$ weakly in $(L^2(\Gamma ))^3$.  
\end{itemize}
\label{prop:cv-rhoN}
As a consequence
$$
\E_\per^{m_0}(\rho_0) \leq \liminf I_N.
$$
\end{Pro}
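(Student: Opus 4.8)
The plan is to use the uniform bounds of Proposition~\ref{apriori} to extract a subsequence along which each term of $\E^{m_N}_\per(\rho_N)$ either converges or is weakly lower semicontinuous. Write $u_N=\sqrt{\rho_N}$. Since $(u_N)$ is bounded in $H^1(\Gamma)$ and $(\rho_N)$ in $L^p(\Gamma)$ for $1\le p\le 3$, up to a subsequence $u_N\rightharpoonup u_0$ weakly in $H^1_\per(\Gamma)$ for some $u_0\ge 0$, and Rellich--Kondrachov on the bounded cylinders $Q\times[-R,R]$ together with a diagonal extraction gives $u_N\to u_0$ strongly in $L^q_\loc(\Gamma)$ for every $q<6$ and almost everywhere. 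Setting $\rho_0:=u_0^2$ we then have $\rho_N=u_N^2\to\rho_0$ almost everywhere, hence in $L^p_\loc(\Gamma)$ for $1\le p<3$ by Vitali (and for $p=3$ using in addition a uniform local $L^\infty$ bound on $\rho_N$, obtained from elliptic regularity applied to~\eqref{base} once $\lambda_N$ is seen to be bounded by testing~\eqref{base} against $u_N$); by Fatou $\rho_0\in L^1(\Gamma)\cap L^{5/3}(\Gamma)$, the uniform $L^p$ bounds give $\rho_N\rightharpoonup\rho_0$ weakly in $L^p(\Gamma)$ for $1<p\le 3$, and $\nabla\sqrt{\rho_N}\rightharpoonup\nabla\sqrt{\rho_0}$ weakly in $(L^2(\Gamma))^3$.

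The first delicate point is upgrading this to convergence on the whole (unbounded) cylinder $\Gamma$. Here I would invoke the decay estimate of Theorem~\ref{Th2}, whose constant is crucially uniform in the nuclear density, hence in $N$: as $m_N$ has the same compact $x_3$-support as $m$, one has $\rho_N(x)=u_N(x)^2\le C(1+\abs{x_3}^3)^{-1}$ for $\abs{x_3}>1$, with $C$ independent of $N$. This provides a uniformly integrable tail, so combining the local $L^1$ convergence on $Q\times[-R,R]$ with the smallness of the tails of $\rho_N$ and of $\rho_0$ (the bound passes to the a.e. limit) yields $\rho_N\to\rho_0$ in $L^1(\Gamma)$. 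Integrating the same pointwise bound against the weight $1+\abs{x_3}$ shows $\norm{(1+\abs{x_3})^{1/2}u_N}_{L^2(\Gamma)}\le C$ uniformly, whence $u_0\in X_\per$ and $\abs{x_3}\rho_0\in L^1(\Gamma)$. Since $m_0$, regarded as a function on $\Gamma$, is smooth, $\R$-periodic and compactly supported in $x_3$, we get $\abs{x}(\rho_0-m_0)\in L^1(\Gamma)$; together with $\rho_0-m_0\in L^1(\Gamma)\cap L^{5/3}(\Gamma)$ and the neutrality $\int_\Gamma\rho_0=\lim_N\int_\Gamma\rho_N=\lim_N\int_\Gamma m_N=\int_\Gamma m_0$ (Lemma~\ref{Neutrality} and $L^1$ convergence) this gives $\rho_0-m_0\in Y_\per$, so in particular $\Phi_0:=G*_\Gamma(\rho_0-m_0)$ is well defined via Proposition~\ref{lemphi}.

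The second, and I expect the main, difficulty is identifying the weak limit of $\nabla\Phi_N=\nabla\bigl(G*_\Gamma(\rho_N-m_N)\bigr)$. By Corollary~\ref{D_G bound}, $\norm{\nabla\Phi_N}_{L^2(\Gamma)}^2=D_G(\rho_N-m_N,\rho_N-m_N)\le C$, so up to a subsequence $\nabla\Phi_N\rightharpoonup\xi$ weakly in $(L^2(\Gamma))^3$, with $\xi=\nabla h$ for some $h$ since gradient fields form a closed subspace. To pin down $h$ I would pass to the limit in the weak Poisson identity $\int_\Gamma\nabla\Phi_N\cdot\nabla\varphi=4\pi\int_\Gamma(\rho_N-m_N)\varphi$ tested against $\varphi\in C_c^\infty(\RR^3)$ (valid because $\rho_N-m_N\in Y_\per$, so $\Phi_N$ solves~\eqref{li} by Proposition~\ref{lemphi}): the left-hand side converges by weak $L^2$ convergence, the right-hand side by $\rho_N\to\rho_0$ in $L^1(\Gamma)$ together with $m_N\rightharpoonup m_0$ in $L^2(\Gamma)$ (Lemma~\ref{pro:cv-mN}), so $-\Delta h=4\pi(\rho_0-m_0)$ in $\D'(\RR^3)$. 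Then $h-\Phi_0$ is harmonic on $\RR^3$ with gradient in $L^2(\Gamma)$, hence constant --- either by the uniqueness part of Proposition~\ref{lemphi} after checking $h\in L^1_\unif$, or directly by expanding in Fourier modes in $\underline x$ and discarding the exponentially growing and affine modes --- so $\xi=\nabla\Phi_0$.

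It remains to assemble the energy inequality: $\int_\Gamma\abs{\nabla u_0}^2\le\liminf_N\int_\Gamma\abs{\nabla u_N}^2$ by weak $H^1$ lower semicontinuity, $\int_\Gamma\rho_0^{5/3}\le\liminf_N\int_\Gamma\rho_N^{5/3}$ by Fatou, and $D_G(\rho_0-m_0,\rho_0-m_0)=\norm{\nabla\Phi_0}_{L^2(\Gamma)}^2\le\liminf_N\norm{\nabla\Phi_N}_{L^2(\Gamma)}^2=\liminf_N D_G(\rho_N-m_N,\rho_N-m_N)$ by weak $L^2$ lower semicontinuity; summing and using $\liminf a_N+\liminf b_N+\liminf c_N\le\liminf(a_N+b_N+c_N)$ gives $\E^{m_0}_\per(\rho_0)\le\liminf_N\E^{m_N}_\per(\rho_N)=\liminf_N I_N$. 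Everything here except the identification of $\lim_N\nabla\Phi_N$ is a routine extraction/Fatou/lower-semicontinuity argument; that step is where the structure of the periodic Coulomb kernel (Propositions~\ref{Deco-G}--\ref{lemphi}) and a Liouville-type uniqueness on the cylinder genuinely enter, and where I would spend the most care.
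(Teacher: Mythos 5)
Your proposal is correct and follows the paper's strategy in its essentials: the uniform $H^1$/$L^p$ bounds of Proposition~\ref{apriori}, extraction of weak/strong/a.e.\ limits, the uniform-in-$N$ pointwise decay from Theorem~\ref{Th2} to control the tails in $x_3$ and upgrade to strong $L^1(\Gamma)$ convergence, the weak Poisson identity together with $\rho_N-m_N\rightharpoonup\rho_0-m_0$ in $L^1$, and lower semicontinuity for the three energy terms. Two points genuinely diverge from the paper. (1) You obtain the neutrality $\int_\Gamma(\rho_0-m_0)=0$ directly from the $L^1(\Gamma)$ convergence and Lemma~\ref{Neutrality}, which is cleaner than the paper's route through the zero mode of the mixed Fourier transform of $\div W=-4\pi(\rho_0-m_0)$. (2) To identify the weak limit $W$ of $\nabla\Phi_N$ with $\nabla\Phi_0$, you argue that $W=\nabla h$ with $h-\Phi_0$ component-wise harmonic and with $L^2(\Gamma)$ gradient, then apply a Liouville argument; as you note, the Fourier version is the one that closes cleanly, since $h$ is only known through its gradient and an $L^1_\unif$ bound on $h$ is not available a priori, so one cannot simply invoke the uniqueness part of Proposition~\ref{lemphi}. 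The paper instead uses a duality computation: for $\phi\in\D(\Gamma)^3$ one has $\div\phi\in Y_\per$, hence $G*_\Gamma\div\phi\in L^\infty(\Gamma)$ by Proposition~\ref{lemphi}, and $\langle\nabla G*_\Gamma(\rho_N-m_N),\phi\rangle=\langle\rho_N-m_N,G*_\Gamma\div\phi\rangle$ passes to the limit by weak $L^1$ convergence; both routes are valid, the paper's being shorter once Proposition~\ref{lemphi} is in hand, yours being more elementary modulo the Fourier--Liouville step. Finally, you are more careful than the paper about the endpoint $p=3$ in $L^p_\loc$: Rellich on bounded cylinders gives $u_N\to u_0$ in $L^p_\loc$ only for $p<6$, whereas the paper asserts $2\le p\le 6$ without comment; your suggestion (bound $\lambda_N$ by testing~\eqref{base} against $u_N$, then bootstrap elliptic regularity for a local uniform $L^\infty$ bound) is the right direction but remains a sketch, so the endpoint $p=3$ is not fully justified in either version.
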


\begin{proof}
The sequence $(\sqrt{\rho_N})_N$  is uniformly bounded \replaced{w.r.t.}{with respect to} $N$ in $H^1(\Gamma )$ by  Proposition~\ref{apriori}. Then, up to a subsequence, it converges to some non-negative function $u_0$ weakly in $H^1(\Gamma)$, strongly in $L^p_{loc}(\Gamma )$ for all $2\leq p\leq 6$ and almost everywhere on $\mathbb{R}^3$.  
Besides, by Theorem~\ref{Th2}, there exists $C\geq 0$ such that for any $N$ and any $x\in\RR^3$ such that $\av{x_3}\geq 1$, it holds 
$$
\av{u_N(x)}\leq \frac{C}{1+ \av{x_3}^{3/2}}. 
$$
By the almost everywhere convergence, $u_0$ satisfies the same estimate, and we have for $\rho_0:=u_0^2$ the following estimate
$$
\av{x_3}\rho_0(x)\leq \rho_0(x)1_{\av{x_3}\leq 1}+ \frac{C}{1+ \av{x_3}^3} \in L^1(\Gamma).
$$
Thus $\rho_0\in L^1(\Gamma)\cap L^{5/3}(\Gamma)$ and $\av{x_3}\rho_0\in L^1(\Gamma)$. 

We show now that $\rho_N$ converges to $\rho_0$ strongly in $L^1(\Gamma)$. 	Let $\epsilon>0$ and $R$ large enough such that for any $M,N\in \NN \setminus \{0\}$
\begin{align*}
		\int_{\av{x_3}>R}\av{\rho_N-\rho_M}\leq \frac{1}{R}\int_{\av{x_3}>R}\av{x_3}\av{\rho_N-\rho_M}\leq 	\frac{C}{R} \leq \frac{\epsilon}{2}.
	\end{align*} 
By the strong convergence of $\rho_N$ in $L^1_\loc(\Gamma)$, for $M,N$ large enough, we have 
$$
\int_{\av{x_3}<R}\av{\rho_N-\rho_M}\leq \frac{\epsilon}{2},
$$
which proves that \replaced{$(\rho_N)_N$}{$(\rho_N)$} is a Cauchy sequence in $L^1(\Gamma)$. Besides, \replaced{$(\nabla \Phi_N)_N$}{$(\nabla \Phi_N)$} is uniformly bounded in $L^2(\Gamma)$. Thus, it weakly converges, up to a subsequence, to $W\in \bra{L^2(\Gamma)}^3$. Since  $-\Delta \Phi_N =4\pi\left(\rho_N-m_N\right)$, then  \deleted{, we have} 
\begin{equation}\label{eq:w-cv-nabla-phi}
	\forall \phi \in \D(\Gamma), \quad  \int_\Gamma \nabla\Phi_N \cdot \nabla\phi =4\pi \int_\Gamma \left(\rho_N-m_N \right)\phi. 
\end{equation} 
By definition of the weak limit, the LHS of~\eqref{eq:w-cv-nabla-phi} converges to \replaced{$\int_\Gamma W \cdot \nabla \phi$}{$\int_\Gamma W\nabla \phi$}, and since $\rho_N-m_N$ weakly converges to $\rho_0-m_0$ in $L^1(\Gamma)$, then
\begin{equation*}
	\forall \phi \in \D(\Gamma), \quad \int_\Gamma W \cdot \nabla\phi =4\pi \int_\Gamma \left(\rho_0-m_0 \right)\phi.
\end{equation*}
Therefore $\div W=-4\pi (\rho_0-m_0)$ \added{in the sense of distributions. To conclude that $\int_\Gamma (\rho_0 - m_0) dx = 0$, we cannot directly apply the divergence theorem that needs the decay of $W$ at $x=\pm \infty$, which is, a priori, only $L^2$}. Instead, we use  the hybrid Fourier transform defined in Ref.~\onlinecite[Eq. 3.13]{BB}, and obtain, for $\underline{R}$ in the dual lattice $\R^*$ and $\eta\in \RR$,
$$2i\pi\left( \begin{array}{c}
	R_1\\
	R_2\\
	\eta
\end{array}\right)\cdot\widehat{W}(R,\eta)=(\widehat{\rho_0-m_0})(R,\eta).$$ 
For $\underline{R}=0$ and $\eta=0$, we obtain $\displaystyle\int_\Gamma \rho_0-m_0=0$. It follows that $\rho_0-m_0\in Y_\per$. 
Thus by Proposition~\ref{lemphi}, $\Phi_0:=G*_\Gamma (\rho_0-m_0)$ is \replaced{well-defined}{well defined}. Finally, we show that $\nabla \Phi_0=W$. Let  $\phi=(\phi_1,\phi_2,\phi_3)\in \D(\Gamma)^3 $. We have $\div \phi \in Y_\per$, thus $G*_\Gamma(\div\phi) \in L^\ii(\Gamma)$.
 Therefore 
\begin{align*}
    \langle \nabla G*_\Gamma(\rho_N-m_N),\phi \rangle &=  \mathbf{-} \langle \rho_N-m_N, G*_\Gamma\div\phi\rangle \\
    &\to \mathbf{-} \langle \rho_0-m_0, G*_\Gamma\div\phi\rangle = \langle \nabla G*_\Gamma(\rho_0-m_0),\phi \rangle.
\end{align*}
\end{proof}

\paragraph*{Step 3: Identification of the limit.}
We show \replaced{next}{in this section} that $\rho_0$ is invariant \replaced{w.r.t.}{with respect to} $(x_1,x_2)$ and it is indeed the unique minimizer of $\E_1^{m_0}$. This will prove that the convergences in Proposition~\ref{prop:cv-rhoN} hold for the whole sequence and prove points~{\it\ref{ii}-\ref{iii}} of Theorem~\ref{Th-H}. 

We start by showing that $\rho_0$ is the unique minimizer of $\E^{m_0}_\per$.

\begin{Pro}
$\rho_0$ is the unique minimizer of $\E^{m_0}_\per$. 
\label{prop:rho0-minimizer}
\end{Pro}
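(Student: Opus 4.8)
The plan is to combine the lower bound $\E^{m_0}_\per(\rho_0)\le\liminf_N I_N$ already obtained in Proposition~\ref{prop:cv-rhoN} with a matching upper bound $\limsup_N I_N\le I^{m_0}_\per$, and then invoke the uniqueness of the minimizer of $\E^{m_0}_\per$ from Theorem~\ref{Th2} (applied with nuclear density $m_0$, which is $\R$-periodic, non-negative, smooth, and compactly supported in $x_3$, hence admissible). Together these give $I^{m_0}_\per\le\E^{m_0}_\per(\rho_0)\le\liminf_N I_N\le\limsup_N I_N\le I^{m_0}_\per$, so $\rho_0$ achieves $I^{m_0}_\per$ and by uniqueness is \emph{the} minimizer.

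First I would produce the upper bound by using the true minimizer $\bar\rho$ of $\E^{m_0}_\per$ (given by Theorem~\ref{Th2}) as a test function for each $I_N$. By the Remark following Lemma~\ref{Neutrality}, $\int_\Gamma\bar\rho=\int_\Gamma m_0=\int_\Gamma m_N$, so $\bar\rho$ is admissible for $I_N$, whence
\[
I_N\le\E^{m_N}_\per(\bar\rho)=\int_\Gamma|\nabla\sqrt{\bar\rho}|^2+\int_\Gamma\bar\rho^{5/3}+\tfrac12 D_G(\bar\rho-m_N,\bar\rho-m_N).
\]
The first two terms do not depend on $N$ and equal the corresponding terms of $\E^{m_0}_\per(\bar\rho)$, so it remains to show $D_G(\bar\rho-m_N,\bar\rho-m_N)\to D_G(\bar\rho-m_0,\bar\rho-m_0)$. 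Writing $\bar\rho-m_N=(\bar\rho-m_0)-(m_N-m_0)$ and expanding the bilinear form $D_G$, this reduces to showing $D_G(\bar\rho-m_0,m_N-m_0)\to0$ and $D_G(m_N-m_0,m_N-m_0)\to0$. By Corollary~\ref{D_G bound} and Lemma~\ref{Neutrality}, $m_N-m_0\in Y_\per$ with $\norm{m_N-m_0}_{L^1(\Gamma)},\norm{m_N-m_0}_{L^{5/3}(\Gamma)},\norm{|x_3|(m_N-m_0)}_{L^1(\Gamma)}$ all bounded uniformly in $N$; for the cross term I would use the representation $D_G(f,g)=\int_\Gamma(G*_\Gamma f)\,g$ with $G*_\Gamma(\bar\rho-m_0)\in L^\ii(\Gamma)$ (Proposition~\ref{lemphi}) together with the weak convergence $m_N\rightharpoonup m_0$ in $L^{5/3}(\Gamma)$ from Lemma~\ref{pro:cv-mN}, after localizing in $x_3$ using the uniform decay $\int_{|x_3|>R}|x_3|\,|m_N-m_0|\le C/R$ to handle the fact that $G*_\Gamma(\bar\rho-m_0)$ need not decay. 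For the diagonal term $D_G(m_N-m_0,m_N-m_0)=\norm{\nabla G*_\Gamma(m_N-m_0)}_{L^2(\Gamma)}^2$, I would show it tends to zero by a similar argument: it is bounded, $\nabla G*_\Gamma(m_N-m_0)$ converges weakly in $(L^2(\Gamma))^3$ (as in the proof of Proposition~\ref{prop:cv-rhoN}, since $-\Delta$ of it is $4\pi(m_N-m_0)\rightharpoonup0$), and pairing against $G*_\Gamma(m_N-m_0)$ — which converges to $0$ weakly-$*$ in $L^\ii$ after the same $x_3$-truncation — forces the norm to vanish. This yields $\limsup_N I_N\le\E^{m_0}_\per(\bar\rho)=I^{m_0}_\per$.

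The main obstacle is precisely the continuity of the Hartree term under the homogenization limit: $D_G$ is only \emph{weakly} lower semicontinuous in general, and $G*_\Gamma f$ fails to decay as $|x_3|\to\infty$, so one cannot simply pass to the limit in $\int(G*_\Gamma f_N)g_N$ by weak convergence alone. The resolution is the uniform tightness in the $x_3$ variable coming from Lemma~\ref{Neutrality} (the $x_3$-marginal of $m_N$ is exactly that of $m$, independent of $N$) and from Theorem~\ref{Th2}'s $|u(x)|\le C/(1+|x_3|^{3/2})$ bound, which lets me split $\Gamma=Q\times\{|x_3|\le R\}\cup Q\times\{|x_3|>R\}$, control the tail uniformly, and apply the in-plane weak convergence of Lemma~\ref{pro:cv-mN} on the compact slab. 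Once the upper bound is in hand, the conclusion is immediate from the uniqueness statement of Theorem~\ref{Th2}.
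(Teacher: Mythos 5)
Your overall architecture is sound and matches the paper's in spirit: compare $\rho_0$ against a test function for $\E_\per^{m_N}$, use the uniform‐in‐$N$ bounds and the convergence of the Hartree term, and conclude via uniqueness. The paper does this slightly more directly, taking an \emph{arbitrary} admissible $\rho$ for $\E_\per^{m_0}$ (which is automatically admissible for each $\E_\per^{m_N}$ by Lemma~\ref{Neutrality}) and proving $\E_\per^{m_0}(\rho_0)\le\E_\per^{m_0}(\rho)$ outright; you use only the true minimizer $\bar\rho$ and then invoke uniqueness, which is logically equivalent. Where the two arguments genuinely diverge is in the proof of the Hartree‐term continuity $D_G(\rho-m_N,\rho-m_N)\to D_G(\rho-m_0,\rho-m_0)$. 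The paper slices in $(x_3,y_3)$: for each fixed pair of heights it pairs the tensor product $h_N(\cdot,x_3)\otimes h_N(\cdot,y_3)\rightharpoonup h(\cdot,x_3)\otimes h(\cdot,y_3)$ in $L^p(Q\times Q)$ against the \emph{fixed} kernel $G(\cdot,x_3-y_3)\in L^{p'}(Q\times Q)$, then applies dominated convergence in $(x_3,y_3)$ using the moment bounds from Lemma~\ref{Neutrality}. This never has to pair two sequences that both move.

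This points to the genuine gap in your argument. In your treatment of the diagonal term you claim that $D_G(m_N-m_0,m_N-m_0)=\int_\Gamma \Phi_N\,(m_N-m_0)\to 0$ because $\Phi_N:=G*_\Gamma(m_N-m_0)$ converges to $0$ weakly-$*$ in $L^\infty$ (after $x_3$-truncation) while $m_N-m_0\rightharpoonup 0$. But an integral of a product of two sequences, each of which converges only weakly (or weakly-$*$), does not in general converge to the product of the limits: think of $\int_0^1\sin^2(Nx)\,\mathrm{d}x\to \tfrac12\neq0$. As written, the step is not justified. The conclusion \emph{is} true, but establishing it requires a compactness input you do not invoke: for instance, use that $(\Phi_N)$ is bounded in $L^\infty(\Gamma)$ with $(\nabla\Phi_N)$ bounded in $L^2(\Gamma)$, hence bounded in $H^1(Q\times[-R,R])$ for each $R$; by Rellich a subsequence converges \emph{strongly} in $L^2_{\rm loc}$, and since $-\Delta\Phi_N=4\pi(m_N-m_0)\rightharpoonup 0$ the strong limit is a bounded harmonic function, i.e., a constant, against which $\int(m_N-m_0)\to 0$; the strong convergence of $\Phi_N$ then pairs legitimately against the weak convergence of $m_N-m_0$ on the compact slab, and the tail is controlled by the uniform $|x_3|$-moment bound. (Alternatively, a Fourier argument works: the in-plane modes of $m_N-m_0$ live at $|\underline{k}|\gtrsim N$, and $\widehat G$ gains $|k|^{-2}$, so $\|\nabla\Phi_N\|_{L^2}^2\lesssim N^{-2}\|m_N-m_0\|_{L^2}^2\to 0$.) Either way, the missing ingredient is some form of compactness for $\Phi_N$; the paper's slicing-plus-dominated-convergence strategy avoids the issue entirely by keeping a fixed kernel in the pairing. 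Your cross-term argument is fine as sketched, since there $G*_\Gamma(\bar\rho-m_0)$ is a fixed $L^\infty$ function and only $m_N-m_0$ moves.
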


 \begin{proof}
Let $\rho\in \set{\rho\geq 0,\; \sqrt{\rho}\in X_\per,\; \int_\Gamma \rho=\int_\Gamma m_0 }$ \added{be} an admissible test function for $\E^{m_0}_\per$. Let us show that 
$$
\E_\per^{m_0}(\rho_0)\leq \E_\per^{m_0}(\rho). 
$$
As $\int_\Gamma m_0=\int_\Gamma m_N $ \added{by Lemma~\ref{Neutrality}}, $\rho$ is also an admissible test function for $\E_\per ^{m_N}$. Thus 
$$
I_N=\E_\per^{m_N}(\rho_N)\leq \E_\per^{m_N}(\rho). 
$$
We know from Proposition~\ref{prop:cv-rhoN} that 
 	$$
 	\E^{m_0}_\per(\rho_0) \leq \liminf I_N\leq \liminf\E_\per^{m_N}(\rho). 
 	$$
It thus remains to show that $\E_\per^{m_N}(\rho)\to \E_\per^{m_0}(\rho)$, which boils  down to showing that 
\begin{equation}\label{eq:cv-D-rho-mN}
D_G(\rho-m_N, \rho-m_N) \to D_G(\rho-m_0, \rho-m_0). 
\end{equation} 
We recall that 
$$
D_G(f,g)=\int_{\Gamma\times\Gamma} f(x)f(y)G(x-y)\diff x \diff y,
$$
with $G(x)=-2\pi \av{x_3}+\frac{1}{\av{x}}+\psi(x)$ and $\psi\in L^\ii(\Gamma)$. 
We denote by $h_N=\rho-m_N$ and $h=\rho-m_0$. 
We recall as well that for any $x_3\in \RR$, $h_N(\cdot,x_3)$ converges to $h(\cdot,x_3)$ weakly in $L^p(Q)$, \replaced{for all}{$\forall$} $ 1\leq p< \ii$. Therefore, for any $x_3,y_3\in\RR$,  $h_N(\cdot,x_3)h_N(\cdot,y_3)$ converges to  $h(\cdot,x_3)h(\cdot,y_3)$ weakly in $L^p(Q\times Q)$.
We have $-2\pi \av{x_3-y_3}+\psi(x-y) \in L^\ii(Q\times Q)$, and $\frac{1}{\av{x-y}}\in L^p(Q\times Q)$ for any $1\leq  p < 2$. Therefore, for any $x_3,y_3\in \RR$
\begin{equation}\label{eq:cv-ae}
\int_{Q\times Q} h_N(x) h_N(y)G(x-y) \diff\underline{x} \diff\underline{y}\to  \int_{Q\times Q} h(x) h(y)G(x-y)\diff\underline{x} \diff\underline{y}. 
\end{equation}
To use a dominated convergence argument, we need to bound the LHS of~\eqref{eq:cv-ae} uniformly \replaced{w.r.t.}{with respect to} $N$ by an $L^1$ function \replaced{w.r.t.}{with respect to} $(x_3,y_3)$. We start with the term with $\psi$. We have 
\begin{equation*}
\av{\int_{Q\times Q}  h_N(x) h_N(y)\psi(x-y) \diff\underline{x} \diff\underline{y} }\leq \norm{\psi}_{L^\ii(\Gamma)}\int_Q\av{h_N(x)}\diff\underline{x}\int_Q\av{h_N(y)}\diff\underline{y},
\end{equation*}
with
\begin{equation}\label{eq:borne-psi}
\int_Q \av{h_N}\diff \underline{x} \leq \int_Q {\rho}\diff \underline{x}+ \int_Q {m_N}\diff \underline{x}= \int_Q \rho\diff \underline{x}+ \int_Q {m}\diff \underline{x}. 	
\end{equation}
The RHS of~\eqref{eq:borne-psi} is indeed an $L^1$ function \replaced{w.r.t.}{wrt} to $x_3$ and $y_3$. We move to the term with $\av{x_3}$. We have
\begin{align}\label{eq:borne-x3}
\av{\int_{Q\times Q}  h_N(x) h_N(y)\av{x_3-y_3} \diff\underline{x} \diff\underline{y} }&\leq \int_Q\av{x_3 \,h_N(x)}\diff\underline{x}\int_{Q} \av{h_N(y)}\diff \underline{y}+\int_Q\av{y_3 \,h_N(y)}\diff\underline{y}\int_{Q} \av{h_N(x)}\diff \underline{x}\nonumber\\
&\leq  \bra{\int_Q \av{x_3}\bra{\rho(x)+m(x)}\diff \underline{x} }\bra{\int_Q\bra{\rho(y)+m(y)}\diff \underline{y} }\nonumber \\
& \quad \quad +\bra{\int_Q \av{y_3}\bra{\rho(y)+m(y)}\diff \underline{y} }\bra{\int_Q\bra{\rho(x)+m(x)}\diff \underline{x} }.
\end{align}
Again, the RHS of~\eqref{eq:borne-x3} is an $L^1$ function \replaced{w.r.t.}{wrt} $x_3$ and $y_3$. Finally, for the term with $\frac{1}{\av{x}}$, we split the term as follows
\begin{align}
\int_{Q\times Q} h_N(x)h_N(y)\frac{1}{\av{x-y}}\diff \underline{x} \diff \underline{y}&= \int_{Q\times Q} \rho(x)\bra{\rho(y)-2m_N(y)}\frac{1}{\av{x-y}}\diff \underline{x} \diff \underline{y} \nonumber \\
& \quad \quad \quad \quad \quad \quad+ \int_{Q\times Q} m_N(x)m_N(y)\frac{1}{\av{x-y}}\diff \underline{x} \diff \underline{y}. \nonumber
\end{align}
The first term is bounded by 
\begin{align*}
\av{ \int_{Q\times Q} \rho(x)\bra{\rho(y)-2m_N(y)}\frac{1}{\av{x-y}}\diff \underline{x} \diff \underline{y} }&\leq \norm{\rho*\frac{1}{\av{x}}}_{L^\ii(\Gamma)}\norm{\rho+2m(\cdot, x_3)}_{L^1(Q)},
\end{align*}
which is an $L^1$ function \replaced{w.r.t.}{with respect to} $x_3$. For the second term we use Young inequality  with $p= q=r=3/2$. We obtain
\begin{align}\label{eq:borne-1-x}
\av{\int_{Q\times Q}m_N(x)m_N(y)\frac{1}{\av{x-y}}\diff \underline{x}\diff \underline{y}}&\leq \norm{m_N(\cdot, x_3)}_{L^{3/2}(Q)}\norm{m_N(\cdot,y_3)}_{L^{3/2}(Q)}\norm{\frac{1}{\av{x}}}_{L^{3/2}(Q)}\nonumber\\
&\leq \norm{m(\cdot,x_3)}_{L^{3/2}(Q)}\norm{m(\cdot,y_3)}_{L^{3/2}(Q)} \times  \norm{\frac{1}{\av{\underline{x}}}}_{L^{3/2}(Q)}. 
\end{align}
As $m$ is compactly supported in $x_3$, $x_3\mapsto \norm{m(\cdot,x_3)}_{L^{3/2}(Q)}$ is also compactly supported and it is an $L^{3/2}$ function, thus it is \deleted{an} $L^1$; which concludes the proof.  
\end{proof}

\begin{Pro}
    The density $\rho_0$ is invariant \replaced{w.r.t.}{wrt} $(x_1,x_2)$, it is indeed the unique minimizer of the 1D model $\E_1^{m_0}$ and 
    $$
    I_0=\E_\per^{m_0}(\rho_0)=\E_1^{m_0}(\rho_0). 
    $$
    \label{1D}
\end{Pro}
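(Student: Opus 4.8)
The plan is to exploit the fact that the homogenized nuclear density $m_0$ depends only on $x_3$, so that $\E^{m_0}_\per$ is invariant under translations of the in-plane variables $\underline x=(x_1,x_2)$; combined with the uniqueness of its minimizer (Proposition~\ref{prop:rho0-minimizer}) this will force $\rho_0$ to be a function of $x_3$ alone. A direct computation will then identify the restriction of $\E^{m_0}_\per$ to $\underline x$-independent densities with the $1$D functional $\E_1^{m_0}$ (up to the volume factor $\av{Q}$, which we normalize to $1$), and the admissibility constraint of~\eqref{TFW-per} restricted to such densities with that of~\eqref{TFW-lim}. It will then follow that $\rho_0$ is the unique minimizer of $\E_1^{m_0}$, so $I_0=\E_1^{m_0}(\rho_0)=\E^{m_0}_\per(\rho_0)$; and since every subsequential limit of $(\rho_N)$ coincides with this same $\rho_0$, the convergences of Proposition~\ref{prop:cv-rhoN} hold along the whole sequence, which yields points~\ref{i}--\ref{iii} of Theorem~\ref{Th-H}.

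\textbf{Step 1.} For $\underline\tau\in\RR^2$ I would consider $(T_{\underline\tau}\rho)(\underline x,x_3):=\rho(\underline x-\underline\tau,x_3)$, defined on $\Gamma$ after $\R$-periodic extension, and check that $T_{\underline\tau}$ is a bijection of the admissible set of~\eqref{TFW-per} with $m=m_0$ that preserves all three terms of the energy and fixes $m_0$: the two kinetic terms are translation invariant, $D_G(\rho-m_0,\rho-m_0)$ is invariant because $G$ depends on its argument only through the difference and in-plane translations preserve Lebesgue measure and $\R$-periodicity, and $T_{\underline\tau}m_0=m_0$ since $m_0$ is constant in $\underline x$. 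Hence $T_{\underline\tau}\rho_0$ is again a minimizer, so by uniqueness $T_{\underline\tau}\rho_0=\rho_0$ for every $\underline\tau\in\RR^2$; a standard Fubini/mollification argument then gives that $\rho_0$ is a.e.\ equal to a function of $x_3$ only, still denoted $\rho_0$.

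\textbf{Step 2.} For $\rho$ admissible for $\E^{m_0}_\per$ and independent of $\underline x$, I would first note $\sqrt\rho\in X_\per\iff\sqrt\rho\in X_1$ and $\int_\Gamma\rho=\int_\Gamma m_0\iff\int_\RR\rho=\int_\RR m_0$, and that the two kinetic terms reduce to $\av{Q}\int_\RR\av{(\sqrt\rho)'}^2$ and $\av{Q}\int_\RR\rho^{5/3}$. For the Hartree term, the $\R$-periodicity of $G$ in the plane variables gives
\[
\int_{Q\times Q}G(\underline x-\underline y,t)\,\diff\underline x\,\diff\underline y=\av{Q}\int_{Q}G(\underline u,t)\,\diff\underline u=\av{Q}^2\,\bar G(t),\qquad \bar G(t):=\frac{1}{\av{Q}}\int_Q G(\underline u,t)\,\diff\underline u ,
\]
and averaging $-\Delta G=4\pi\sum_{\underline k\in\R}\delta_{(\underline k,0)}$ over $Q$ (the transverse Laplacian drops out by periodicity) yields $-\bar G''=4\pi\av{Q}^{-1}\delta_0$ in $\D'(\RR)$, hence $\bar G(t)=-2\pi\av{Q}^{-1}\av{t}+\mathrm{const}$ by symmetry. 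The additive constant is annihilated by neutrality $\int_\RR(\rho-m_0)=0$ (recall $\rho_0-m_0\in Y_\per$ from Proposition~\ref{prop:cv-rhoN}), so $D_G(\rho-m_0,\rho-m_0)=\av{Q}\,D_1(\rho-m_0,\rho-m_0)$, and altogether $\E^{m_0}_\per(\rho)=\av{Q}\,\E_1^{m_0}(\rho)$. Thus minimizing $\E^{m_0}_\per$ over $\underline x$-independent admissible densities is the same problem as $I_1^{m_0}=I_0$; since $\rho_0$ is the global minimizer of $\E^{m_0}_\per$ (Proposition~\ref{prop:rho0-minimizer}) and is $\underline x$-independent, it is the unique minimizer of $\E_1^{m_0}$ given by Theorem~\ref{thf}, so $I_0=\E_1^{m_0}(\rho_0)$. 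Combined with Proposition~\ref{prop:rho0-minimizer} (and $\av{Q}=1$), $I_0=\E_1^{m_0}(\rho_0)=\E^{m_0}_\per(\rho_0)=\lim_N I_N$; and every subsequential limit of $(\rho_N)$ being this unique $\rho_0$, the convergences of Proposition~\ref{prop:cv-rhoN} hold for the full sequence, which proves points~\ref{i}--\ref{iii} of Theorem~\ref{Th-H}.

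\textbf{Main obstacle.} The symmetrization of Step~1 is short once uniqueness is available; I expect the technical core to be the reduced-kernel computation of Step~2 — justifying the interchange of the conditionally convergent lattice sum in~\eqref{eq:def G} with the $\underline x$-integration, and the distributional identity for $\bar G$ — together with verifying that $\rho_0-m_0$ decays fast enough for $D_1(\rho_0-m_0,\rho_0-m_0)$ to be finite, which follows from the bound $\av{u_N(x)}\le C(1+\av{x_3}^{3/2})^{-1}$ inherited by $u_0=\sqrt{\rho_0}$ in Proposition~\ref{prop:cv-rhoN}.
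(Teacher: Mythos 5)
Your proof is correct and follows essentially the same argument as the paper: translation invariance of $\E_\per^{m_0}$ together with uniqueness of the minimizer forces $\rho_0$ to depend on $x_3$ alone, and then $D_G$ collapses to $D_1$ on $\underline{x}$-independent densities, identifying $\E_\per^{m_0}$ with $\av{Q}\,\E_1^{m_0}$. The only difference is that the paper simply cites the identity $D_G(f,f)=D_1(f,f)$ for $\underline{x}$-independent $f$ from \cite[Propo.~3.1]{hb}, whereas you re-derive it by averaging the Poisson equation for $G$ over $Q$ and using neutrality to discard the additive constant in $\bar G$; both are valid and lead to the same conclusion.
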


\begin{proof}
 For any $\underline{R}\in \RR^2$, $\tau_{\underline{R}}\rho_0$ is also a minimizer of $\E_\per^{m_0}$, where $\tau$ is the translation operator. Thus, by the convexity of the functional  $\E_\per^{m_0}$ and the uniqueness of its minimizer (see Theorem~\ref{Th2}), we deduce that
 $\rho_0$ is invariant \replaced{w.r.t.}{with respect to} $(x_1,x_2)$. 
 To conclude the proof of the proposition, we notice that for any $\RR^2$-invariant function $f$
 $$
 D_G(f,f)=D_1(f,f),
 $$
 as shown in Ref.~\onlinecite[Prop. 3.1]{hb}. 
\end{proof}

\paragraph*{Step 4: Convergence of the energy.}
We prove in this section point~{\it \ref{i}} of Theorem~\ref{Th-H}, which consists of the convergence of the energy. 

\begin{Pro}
We have
$$ I_0=  \lim I_N. $$
\end{Pro}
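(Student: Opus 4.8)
The plan is to combine the lower bound already at our disposal with a matching upper bound obtained by using the limit density $\rho_0$ as a trial state.

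\emph{Lower bound.} By Proposition~\ref{prop:cv-rhoN} and Proposition~\ref{1D} we already know that $I_0 = \E_1^{m_0}(\rho_0) = \E_\per^{m_0}(\rho_0) \leq \liminf_N I_N$. To be safe about passing from a subsequence to the full sequence, note that the uniform bounds of Proposition~\ref{apriori} hold for the whole sequence, so any subsequence of $(I_N)$ admits a further subsequence along which the compactness statements of Proposition~\ref{prop:cv-rhoN} hold with some limit $\widetilde\rho_0$; by Theorem~\ref{Th2} the minimizer of $\E_\per^{m_0}$ is unique, hence $\widetilde\rho_0 = \rho_0$ and $\liminf$ along that subsequence is $\geq \E_\per^{m_0}(\rho_0) = I_0$. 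Applying this to a subsequence realizing $\liminf_N I_N$ gives $\liminf_N I_N \geq I_0$ for the full sequence.

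\emph{Upper bound.} Since $\int_\Gamma \rho_0 = \int_\Gamma m_0 = \int_\Gamma m_N$ by Lemma~\ref{Neutrality} and $\sqrt{\rho_0}\in X_\per$ (indeed $\sqrt{\rho_0}\in X_1\subset X_\per$ by Proposition~\ref{1D}), the density $\rho_0$ is an admissible trial state for $I_N = I^{m_N}$, so
$$
I_N \leq \E_\per^{m_N}(\rho_0) = \int_\Gamma |\nabla\sqrt{\rho_0}|^2 + \int_\Gamma \rho_0^{5/3} + \frac12 D_G(\rho_0 - m_N,\rho_0 - m_N).
$$
The first two terms are independent of $N$, and $D_G(\rho_0 - m_N,\rho_0 - m_N) \to D_G(\rho_0 - m_0,\rho_0 - m_0)$ is exactly~\eqref{eq:cv-D-rho-mN} from the proof of Proposition~\ref{prop:rho0-minimizer} specialized to $\rho = \rho_0$ (dominated convergence in $(x_3,y_3)$, using the decomposition of $G$ into its $|x_3|$, $1/|x|$ and $\psi$ parts of Proposition~\ref{Deco-G} and the weak $L^p(Q)$ convergence $m_N(\cdot,x_3)\rightharpoonup m_0(x_3)$). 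Hence $\limsup_N I_N \leq \E_\per^{m_0}(\rho_0) = I_0$, the last equality by Proposition~\ref{1D}.

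Putting the two bounds together gives $\liminf_N I_N = \limsup_N I_N = I_0$, i.e. $I_0 = \lim_N I_N$. I do not expect a genuine obstacle here: all the analytic content — the uniform bounds, the compactness of $(\rho_N)$, the identification of the limit as the $\RR^2$-invariant minimizer of the $1$D functional, and the continuity of $m\mapsto D_G(\rho_0 - m,\rho_0 - m)$ along the homogenizing sequence — has been handled in the preceding propositions; the only mildly delicate point is the subsequence-to-whole-sequence argument, which is legitimate precisely because the minimizer of $\E_\per^{m_0}$ is unique.
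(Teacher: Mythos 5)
Your argument is correct and follows exactly the paper's route: the lower bound $I_0\le\liminf I_N$ from Propositions~\ref{prop:cv-rhoN} and~\ref{1D}, and the upper bound $\limsup I_N\le I_0$ by testing $\E_\per^{m_N}$ with $\rho_0$ and invoking the convergence~\eqref{eq:cv-D-rho-mN}. The only addition is your explicit subsequence-to-full-sequence argument (using uniqueness of the minimizer of $\E_\per^{m_0}$), which the paper leaves implicit but which is a correct and worthwhile clarification.
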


\begin{proof}
By Propositions~\ref{prop:cv-rhoN} and~\ref{1D}, we have 
$$
I_0\leq \liminf I_N. 
$$
To prove the lower bound on $I_0$, we use $\rho_0$ as a test function for $\E_\per^{m_N}$ and obtain 
\begin{equation*}
    \begin{split}
    I_N\leq \int_{\Gamma } \abs{\nabla \sqrt{\rho_0}}^2 +\int_{\Gamma }\rho_0^{5/3}+D_G(\rho_0-m_N,\rho_0-m_N)\to \E_\per^{m_0}(\rho_0)=I_0,
\end{split}
\end{equation*}
where the last convergence is obtained using~\eqref{eq:cv-D-rho-mN}. Thus 
$$
\limsup I_N \leq I_0,
$$
which concludes the proof of the proposition. 

\end{proof}

\section{Numerical illustration}\label{sec:numerics}

In this section, we illustrate the convergence result of Theorem~\ref{Th-H}. We recall that the minimization problem  \eqref{TFW-per} has a unique minimizer \replaced{$\rho$ and that $u:=\sqrt{\rho}$}{$\rho$, and $u:=\sqrt{\rho}$} is the unique solution of the corresponding Euler--Lagrange equations 
\begin{equation}\label{base2}
\left\{	
\begin{array}{rll}
 \Delta u+\frac53 u^{\frac73}+ u\Phi & =& -\lambda u, \\
-\Delta \Phi & = & 4\pi(u^2-\mu).
\end{array} \right.
\end{equation}
We numerically solve the 3D periodic Euler--Lagrange system in \eqref{base2} by coupling a spectral approach and a fixed point iterative scheme
(a python implementation can be accessed at \protect\url{https://github.com/sa3dben/TFW_model}). We adopt the following iterative fixed point scheme: starting form an arbitrary initial guess $u_0$, at each fixed point iteration, a linear eigenvalue problem is solved  
\begin{equation}
\label{base2_lin}
\left( \Delta + \frac{5}{3} u_ n ^{4/3}+ \Phi_n\right)u_{n+1}=-\lambda_1 u_{n+1},
\end{equation}
with $\Phi_n$ the solution to the Poisson equation $-\Delta \Phi_n  =  4\pi(u_n^2-\mu)$.
The fixed point iterations are continued until convergence, i.e. $\left\|u_n-u_{n+1} \right\|_2 \leq \varepsilon$ for some tolerance $\varepsilon>0$. We take $\varepsilon = 10^{-6}$ in our simulations.

We adopt a periodic setting, with the unit cell $\textstyle \Gamma=Q\times [-\frac{L}2,\frac{L}2]$, and $Q = \left[-\frac{1}{2},\frac{1}{2}\right]^2$. We use a Fourier series decomposition approach to solve the linearized equation \eqref{base2_lin}.
The solution is expected to be $\R-$periodic \replaced{w.r.t.}{with respect to} $x_1$ and $x_2$ and decaying at infinity \replaced{w.r.t.}{with respect to} $|x_3|$. Taking a large value for $L$, it is reasonable to impose period conditions in the third direction as well. Therefore, we consider the Fourier basis
$$
f_k(x)= \frac1{\sqrt{L}} e^{i 2\pi \bra{\underline{x}\cdot\underline{k}+\frac{k_3 x_3 }{L}}}.
$$
We illustrate the convergence results in Theorem \ref{Th-H} using the series of nuclei densities $m_N$ defined as
$$m_N(x_1,x_2,x_3) = \mu(N,x_1) m_0(x_3) = \frac{5\pi}{2} \left|\cos(N  \pi x_1 )\right|  \exp\left(-\frac{x_3^2}{8}\right).$$

\noindent We note that $\int_Q \mu(N, x_1) \diff x_1 \diff x_2 = \frac{\pi}{2} \int_Q \left|\cos(N \cdot \pi x_1 )\right| = 1$. We take $L=2\pi$, which  proves effective as the function $m_0(x_3) = 5 \exp\left(-\frac{x_3^2}{8}\right)$ rapidly decreases in the $x_3$ direction. The same behaviour is assumed for the solutions $u_N$, and is validated with 1D simulations for $(u_0, \rho_0)$, the 1D solution corresponding to $m_0$.

The different functions are sampled on a grid of dimensions $(200 \times N, 4, 300)$. We keep the discretization constant along the \replaced{$x_2$ and $x_3$-axis}{$y$ and $z$-axis} as the homogenization is only applied  in the $x_3$ dimension. As we will only compute the zero Fourier mode along $x_2$, four points are enough. 
We compute the positive Fourier modes up to indices $K = (K_1, K_2, K_3) = (4 \times N, 0, 6)$. On the $x_3$-axis, we use the first $7$ Fourier modes, found to be sufficient for accurate reconstruction of $m_N$ and $m_0$ along that axis. Simulations in 1D prove that the solution $(u_0(x_3), \rho_0(x_3))$ is also accurately represented by these $7$ modes. We assume that this is also the case in 3D for all values of $N$.  The choice of $(4N+1)$ Fourier modes in the $x_1$ direction, is motivated by the need for increased Fourier modes as $N$ grows, to capture the changing characteristics and new harmonics for $m_N$ and eventually for the solution $(\rho_N, u_N)$. In fact, we know that $(\rho_N, u_N)$ will rather converge to a constant along the axis $x_1$ by Theorem \ref{Th-H}, however we enforce a hard check for this by computing also large modes for $\rho_N$ and $u_N$ .

Different $L^p$-norms of the error $e_N = \rho_N - \rho_{0}$ are presented in Figure \ref{fig: e_n norms} for values of $N=1,\ldots,5$. We, indeed, observe the convergence stated in Theorem~\ref{Th-H}.
The same figure shows some convergence rates estimates. We conjecture that we have a theoretical convergence rate of $1/N^r$ with $\frac{10}{3}<r< 4$ for the different norms of \replaced{$e_N$}{$e_n$} and the gradient  $L^2$ norm $\norm{ \nabla u_{\text{\replaced{$N$}{$n$}}} -\nabla{u_{0}} }_2$. The discrepancies in Figure \ref{fig: e_n norms} at $N=4,5$ are probably due to numerical errors.

\begin{figure}[ht!]
    \centering
    \includegraphics[scale=0.3]{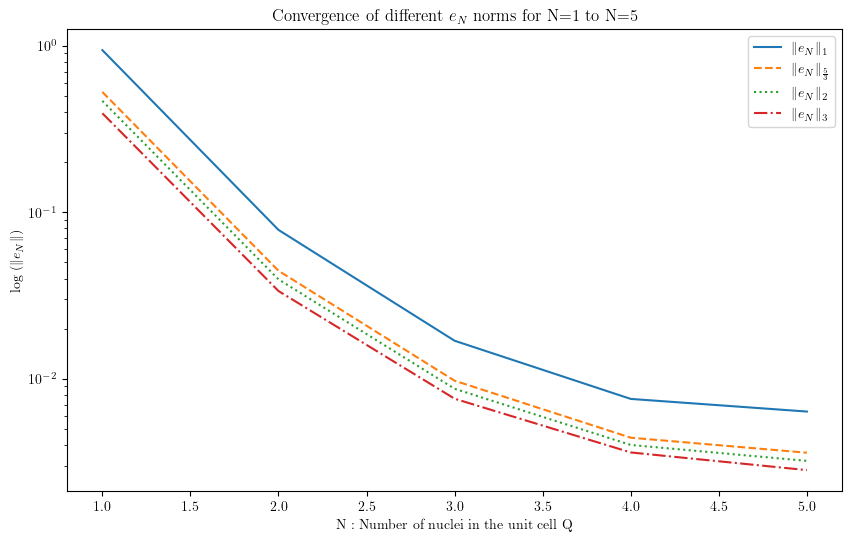}
    \includegraphics[scale=0.3]{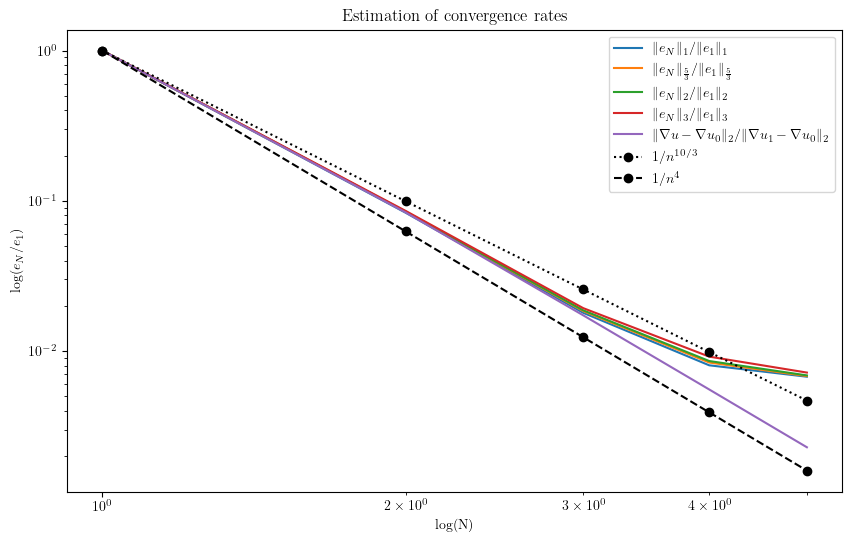}
    \caption{ Convergence analysis for \replaced{$e_N$}{$e_n$} norms (left) and  convergence rates estimation for \replaced{$e_N$}{$e_n$} norms and the gradient $L^2$ norm $\norm{ \nabla{u_n} -\nabla{u_{0}} }_2$ 
 (right).}
    \label{fig: e_n norms}
\end{figure}

Figure \ref{fig: energy norms} illustrates the convergence of the ground state energy  $I_N =\E^{m_N}_{per}(\rho_N)$ to the 1D energy $I_0 =\E^{m_0}_{1}(\rho_0) $, proved in Theorem~\ref{Th-H}. The estimated convergence rate for the energy is of the order~$\approx \dfrac{1}{N^{3/2}}$.

\begin{figure}[ht!]
    \centering
    \includegraphics[scale=0.35]{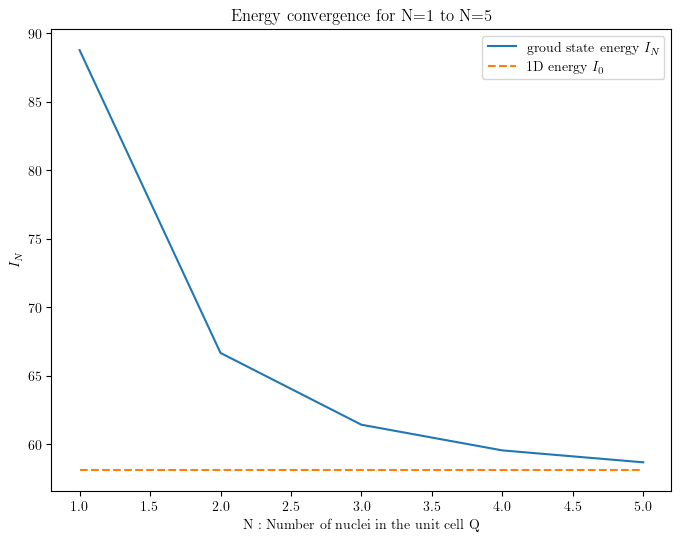}
    \includegraphics[scale=0.35]{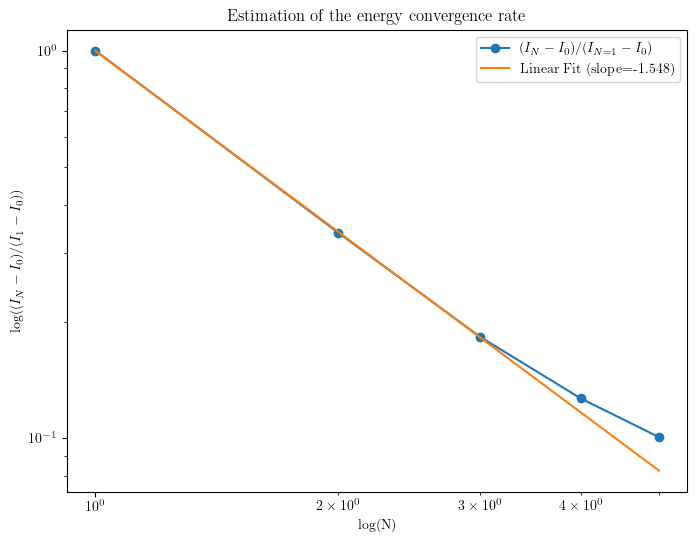}
    \caption{Convergence analysis for the energies $I_N$ (left) and  convergence rate estimation (right).}
    \label{fig: energy norms}
\end{figure}

\begin{Rem}
It is noteworthy to mention that, for $N \geq 6$, and using higher order Fourier modes, the presence of numerical errors hinders a clear observation of further convergence for $e_N$.
\end{Rem}

\bibliographystyle{siam} 
\bibliography{main}

\end{document}